\definecolor{lightblue}{RGB}{73,151,208}
\definecolor{crimson}{RGB}{140,41,53}
\newtheorem{problem}{Problem}
\newtheorem{lemma}{Lemma}
\newtheorem{theorem}{Theorem}
\newtheorem{corollary}{Corollary}[theorem]
\newtheorem{remark}{Remark}
\def\>{\rangle}
\def\<{\langle}
\DeclareMathOperator*{\argmin}{argmin}
\newcommand{\CNOT}{\mathrm{CNOT}}
\newcommand{\SWAP}{\mathrm{SWAP}}
\DeclareTextFontCommand{\todo}{\bfseries\color{red}}
\DeclarePairedDelimiter{\abs}{\lvert}{\rvert}
\DeclarePairedDelimiter{\floor}{\lfloor}{\rfloor}
\newcolumntype{L}[1]{>{\raggedright\let\newline\\\arraybackslash\hspace{0pt}}m{#1}}
\newcolumntype{C}[1]{>{\centering\let\newline\\\arraybackslash\hspace{0pt}}m{#1}}
\newcolumntype{R}[1]{>{\raggedleft\let\newline\\\arraybackslash\hspace{0pt}}m{#1}}
\DeclareMathOperator{\wt}{wt}
\DeclareMathOperator{\im}{im}
\author{Jeffery Yu}
\affiliation{Joint Center for Quantum Information and Computer Science, NIST/University of Maryland,
College Park, Maryland 20742, USA}
\altaffiliation{Physics and Informatics Laboratory, NTT Research, Inc., Sunnyvale, California 94085, USA}
\author{Yuan Liu}
\affiliation{Department of Electrical and Computer Engineering, North Carolina State University, Raleigh, North Carolina 27606, USA}
\author{Sho Sugiura}
\affiliation{Blocq, Inc., 717 Market Street, San Francisco, California 94103, USA}
\altaffiliation{Physics and Informatics Laboratory, NTT Research, Inc., Sunnyvale, California 94085, USA}
\author{Troy Van Voorhis}
\affiliation{Department of Chemistry, Massachusetts Institute of Technology, Cambridge, Massachusetts 02139, USA}
\author{Sina Zeytino\u glu}
\affiliation{Institute for Theoretical Physics, TU Wien, Wiedner Hauptstraße 8-10/136, A-1040 Vienna, Austria}
\email{sina.zeytinoglu@tuwien.ac.at}
\altaffiliation{Physics and Informatics Laboratory, NTT Research, Inc., Sunnyvale, California 94085, USA}
\title{Clifford circuit based heuristic optimization of fermion-to-qubit mappings }
\begin{document}


\begin{abstract}
Simulation of interacting fermionic Hamiltonians is one of the most promising applications of quantum computers. However, the feasibility of analysing 
fermionic systems with a quantum computer hinges on the efficiency of 
fermion-to-qubit mappings that encode non-local fermionic degrees of freedom in local qubit degrees of freedom. While recent works have highlighted the importance of designing fermion-to-qubit mappings that are tailored to specific problem Hamiltonians, the methods proposed so far are either restricted to a narrow class of mappings or they use computationally expensive and unscalable brute-force search algorithms.   
Here, we address this challenge by designing a \textit{heuristic} numerical optimization framework for fermion-to-qubit mappings. To this end, we first translate the fermion-to-qubit mapping problem to a Clifford circuit optimization problem, and then use simulated annealing to optimize the average Pauli weight of the problem Hamiltonian. 
For all fermionic Hamiltonians we have considered, the numerically optimized mappings outperform their conventional counterparts, including ternary-tree-based mappings that are known to be optimal for single creation and annihilation operators. We find that our optimized mappings yield between 15\% to 40\% improvements on the average Pauli weight when the simulation Hamiltonian has an intermediate level of complexity. Most remarkably, the optimized mappings improve the average Pauli weight for $6 \times 6$ nearest-neighbor hopping and Hubbard models by more than $40\%$ and $20\%$, respectively. Surprisingly, we also find specific interaction Hamiltonians for which the optimized mapping outperform \textit{any} ternary-tree-based mapping. Our results establish heuristic numerical optimization as an effective method for obtaining mappings tailored for specific fermionic Hamiltonian.  
\end{abstract}

\section{Introduction}

Solving the many-electron Schr\"{o}dinger equation is of paramount importance for the scientific understanding of physical properties of molecules and materials \cite{Schiffer_PRL_1995,Jarrell_EPL_2001,bohn2017cold,balakrishnan2016perspective}. However, the complexity of many-body quantum systems is a formidable and often insurmountable obstacle for the classical computational tools. 
In recent years, algorithms for digital quantum computers that follow Feynman's initial vision \cite{feynman2018simulating,lloyd1996universal} emerged as one of the most promising tools to address the difficulties associated with quantum simulation \cite{bharti2022noisy,childs2018toward,cerezo2021variational}.

\begin{figure*}[h!]
\centering
\begin{tikzpicture}
    \tikzstyle{box} = [rectangle, rounded corners, text centered, text width=2.5cm, draw=black, scale=0.8];
    \tikzstyle{arrow} = [thick, ->, >=stealth];
    \tikzstyle{label} = [anchor=north, text centered, text width=1.3cm, scale=0.8, yshift=-0.1cm]
    \node (system) [box] {Physical system};
    \node (fermih) [box, right of=system, xshift=3.5cm] {Fermionic Hamiltonian};
    \node (qubith) [box, right of=fermih, xshift=3.5cm] {Qubit Hamiltonian};
    \node (clifford) [box, dashed, right of=qubith, xshift=3.5cm] {Optimized Hamiltonian};
    \node (algo) [box, right of=clifford, xshift=3cm] {Simulation, VQE, etc.};
    \draw [arrow] (system) -- node[label] {\footnotesize Second  quantize} (fermih);
    \draw [arrow] (fermih) -- node[label] {\footnotesize JWT, BKT, etc.} (qubith);
    \draw [arrow, dashed] (qubith) -- node[label] {\footnotesize Clifford optimize} (clifford);
    \draw [arrow, dashed] (clifford) -- (algo);
\end{tikzpicture}
\caption{Schematic for representing a physical fermionic system on a quantum device consisting of qubits. The physical system is first modelled using a fermionic Hamiltonian expressed using fermionic second-quantized operators. Conventional fermion-to-qubit mappings allows one to represent the fermionic Hamiltonian using operators acting on qubit degrees of freedom, albeit without taking into account the structure of problem Hamiltonian. Our work focuses on using numerical optimization to design a Clifford circuit,  whose adjoint action on the qubit Hamiltonian minimizes a cost function (dotted lines).}
\label{fig:flowchart}
\end{figure*}
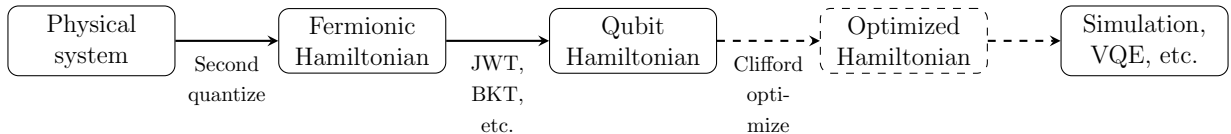

In order to perform digital quantum simulation, one needs to map physical systems onto qubits realized in well-controlled quantum devices. 
Although spin systems can usually be easily mapped to qubits, quantum systems consisting of fermions present a non-trivial mapping problem \cite{bravyi2002fermionic} because the fermionic wave function of electrons is antisymmetric under particle exchange. This fundamental difference between fermions and qubits means that local fermionic operators are generically mapped to non-local qubit operators that have a large Pauli weight (i.e., that act on a large number of qubits).
In turn, non-local qubit operators increase the implementation cost of the fermionic simulation on digital quantum computers, especially when considering compilation of such long Pauli strings into single- and two-qubit gates \cite{paulihedral}. Moreover, even in the context of Variational Quantum Algorithms, where the simulated Hamiltonian is not encoded in the dynamics of the system but rather in the expectation values of observables on a parametrized quantum state \cite{cerezo2021variational,holmes2022connecting}, the nonlocality of the Pauli operators dramatically reduces the effectiveness of the state-of-the-art classical shadow protocols \cite{huang2020predicting}.
 
Tremendous progress has been made in encoding fermions to qubits by considering maps between single fermionic annihilation operators and Pauli strings acting on qubits. These conventional fermion-to-qubit mappings, readily available in widely-used quantum software libraries \cite{mcclean2019openfermion,qiskit_nature}, include the Jordan-Wigner transformation (JWT), the Bravyi-Kitaev transformation (BKT), and the parity map \cite{bravyi2002fermionic, Seeley_2012, Tranter_2018}. The more recent development of ternary-tree mappings\cite{Jiang_2020} 
has resulted in an elegant formalism for understanding the connections between these conventional mappings. Moreover, the ternary-tree formalism revealed that all mappings given by shallow ternary trees are asymptotically optimal with respect to a key performance metric, the Pauli weight averaged over \emph{individual} creation and annihilation operators. Yet crucially, this result is not sufficient for determining which among the shallow ternary-tree mappings results in the lowest average Pauli weight for a specific problem Hamiltonian.

Recent years have witnessed an increased effort to address the challenge of tailoring fermion-to-qubit mappings to a concrete problem Hamiltonian and experimental hardware. Some of these works consider designing mappings that map $n$ fermionic modes to $m\neq n$ qubits, unlike their conventional counterparts. In particular, fermion-to-qubit mappings that leverage global Hamiltonian symmetries to construct mappings on $m<n$ qubits \cite{bravyi2017tapering,fischer2019symmetry}, as well as those that map local fermionic operators to local qubit operators acting on more than $m>n$ qubits on specific hardware lattice geometries were constructed \cite{verstraete2005mapping,whitfield2016local,Steudtner_2018,Setia_2019,derby2021compact,ultrafast2024obrien}.
In a complementary manner, methods for leveraging the structure of the physical chemistry Hamiltonian to obtain simplified problem Hamiltonians were invented \cite{bravyi2017tapering, Babbush_2018}.
Similarly, mappings tailored for a concrete computational task has been developed, including those for Trotterization based simulation \cite{hastings2014improving} maximizing initial gradients for VQE \cite{sun2023towards} or reducing entanglement in ground states \cite{mishmash2023hierarchical}.

The complexity associated with the particularities of a problem Hamiltonian and hardware realization has motivated numerical optimization approaches to designing fermion-to-qubit mapping \cite{Steudtner_2018,chien2022optimizing,nys2022variational,Chiew_2023,miller2024treespilation}. However, these works either focus on a narrow class of problem Hamiltonians \cite{nys2022variational} or a smaller set of fermion to qubit mappings \cite{Steudtner_2018,Chiew_2023,miller2024treespilation}. While the approach presented by Ref. \cite{chien2022optimizing} does not have either one of these restrictions, their use of computationally expensive approximately brute-force optimization over an exponentially large class of mappings drastically reduced the applicability of their approach to translationally invariant systems with small unit cells. Achieving effective numerical optimization of fermion-to-qubit mappings for large systems is the main motivation of this work.

To efficiently leverage numerical optimization methods for designing fermion-to-qubit mappings for large systems, we first translate the problem of designing fermion-to-qubit mappings to the problem of optimizing a unitary transformation of the qubits. The main insight that allows for this is simple: the adjoint action of a unitary on the qubit representation of fermionic operators generates a new fermion-to-qubit mapping. We then focus on a restricted class of unitaries, represented by Clifford circuits, and study the structure of the resulting optimization problem. 
We show that the class of fermion-to-qubit mappings generated by transforming an initial ternary-tree-based mapping \emph{includes} all ternary-tree mappings \cite{Jiang_2020}, and more. In other words, mappings that are parametrized by Clifford circuits include all ternary-tree mappings. We then explore these Cliffor-circuit parametrized mappings using a heuristic simulated annealing algorithm \cite{kirkpatrick1983optimization} to find a circuit that optimizes a given cost function. The optimization step that we focus on this paper is depicted schematically in Fig. \ref{fig:flowchart}.
Optimization of Clifford circuits has two advantages. First, because the Clifford group is by definition the normalizer of the Pauli group \cite{gottesman1998heisenberg}, it maps each Pauli term to a single Pauli term, preventing the growth of the number of Hamiltonian terms during optimization. Second, the classical simulability of Clifford circuits enables an efficient evaluation of the cost function. 

Although our approach is flexible in terms of the choice of the cost function, here we focus on a simple and hardware-independent cost function, namely the average Pauli weight of the resulting qubit Hamiltonian, for all examples we consider. The heuristic algorithms do not come with the promise of finding a globally optimal mapping. Yet, we demonstrate that our approach allows one to efficiently explore the space of high-performing fermion-to-qubit mappings for a broad range of problem Hamiltonians with more than 1,500 terms, without any additional tapering \cite{bravyi2017tapering,fischer2019symmetry} of the Hamiltonian prior to the optimization step.

The numerical optimization of fermion-to-qubit mappings parametrized by Clifford circuits
results in improvements that are both of practical and theoretical importance. To quantify these improvements, we compare the average Pauli weight associated with the fermionic problem Hamiltonian under the optimized mapping to that under three conventional mappings: JWT, BKT, and shallow ternary-tree mappings. Specifically, we report the improvement due to the optimized mapping using the lowest percent reduction of the Pauli weight. All results were obtained through simulated annealing calculations with each optimization run taking at most 3 days on a single CPU.
The improvements that we obtained over a broad range of problem Hamiltonians highlights the potential of the heuristic optimization approach for designing efficient customized fermion-to-qubit mappings.

On the practical side, we demonstrate improvements on one- and two-dimensional hopping models with open boundary conditions.
In the case of one-dimensional hopping Hamiltonians with various hopping ranges, our optimization approach improves the constant overhead of shallow ternary-tree mappings. Specifically, our optimized mappings result in a percent reduction of $5\%$ to $10\%$ for all-to-all coupled systems of up to 20 sites. Interestingly, the relatively modest performance of the optimal mappings for all-to-all coupled hopping models is substantially improved as we reduce the range of hopping. In particular, for systems of size 10--20, we find that the performance of the optimized mapping is peaked around the hopping range $r=6$. This result highlights that our optimized mappings manage to leverage the structure of the problem of intermediate complexity. Turning our attention to two-dimensional hopping models, we find even more remarkable improvements. We find that for $L\times L$ nearest-neighbor hopping models, we obtain a percent reduction of up to $45\%$ for $L=6$ (120 terms) and up to $35\%$ for $L=8$ (224 terms).  
Importantly, we find that the percent reduction in two-dimensional hopping models is not drastically reduced with the introduction of on-site interactions. We find that for two-dimensional interacting Hubbard models of up to 36 sites (349 Hamiltonian terms), the percent reduction is 25\%, compared to the ternary-tree mapping. Finally, 
we provide numerical evidence that the optimized mappings result in a sizable percent reduction for intermediate-scale chemistry Hamiltonians, in particular one-dimensional Hydrogen chains. For a 6-site Hydrogen chain, described by approximately 1500 terms, we find a percent reduction of $\approx 10-20\%$, compared to conventional mappings. 

A more detailed study of the Hydrogen chain led us to a discovery of theoretical importance. We observe that none of the fermion-to-qubit mappings optimized for hydrogen chains satisfied a necessary condition to be classified as a ternary-tree mapping \cite{Jiang_2020}. We were thus motivated to answer the following question: ``Is there a simple model for which our optimized mappings overperform \textit{all} ternary-tree mappings?". We found that the answer to this question is affirmative. Through an exhaustive search over all ternary-tree mappings, we verified that for a system consisting of 4 fermionic modes interacting through a simple four-body interaction term, the optimized fermion-to-qubit mapping outperforms \emph{all} ternary-tree mappings by at least $20\%$.

The results of this paper are organized as follows. In Sec. \ref{sec:methods}, we present details of our fermion-qubit mapping methods and the proposed optimization algorithm. Sec. \ref{sec:results} presents key analytical results with a comparison to ternary-tree mapping, followed by numerical results demonstrating the performance of our algorithm for one- and two-dimensional hopping and interacting fermionic lattice models and hydrogen chains. We conclude with an outlook for utilizing numerical optimization for designing customized fermion-to-qubit mappings in Sec. \ref{sec:conclusion}.

\section{Methods}
\label{sec:methods}

We begin in Sec. \ref{sec:notation} by setting up a formal mathematical framework for fermion-to-qubit mappings. In Sec. \ref{sec:unitary}, we use this framework to relate different mappings to each other via unitary transformations, and introduce a corresponding Hamiltonian-dependent optimization problem over unitaries. In Sec. \ref{sec:clifford}, we restrict the search space to the more manageable and discrete Clifford group and briefly discuss the numerical optimization algorithm used to produce the numerical results in Section $\ref{sec:results}$.

\subsection{Notation and background}
\label{sec:notation}

Fermion-to-qubit mappings allow one to use quantum processors consisting of local qubit degrees of freedom to address problems which are naturally described with non-local fermionic degrees of freedom \cite{bravyi2002fermionic, Tranter_2018}.

We take $\mathcal{H}_f$ as the space of second-quantized Fock states on $n_{\rm f}$ ordered fermionic modes $\ket{\lambda}$. This is a $2^{n_f}$-dimensional Hilbert space that can be decomposed as
$
\mathcal{H}_f = \bigoplus_{m=1}^{n_f} \mathcal{H}^{m}_f,$
where $\mathcal{H}^{m}_f$
is an $m-$fermion Hilbert space spanned by $\Lambda_{\rm as}(\ket{\lambda_1, \ldots, \lambda_m})$ where $\Lambda_{\rm as}$ is the antisymmetrizes the state with respect to particle exchange \cite{bravyi2002fermionic,altland2010condensed}.
In second-quantized form, these basis states are written in terms of Fock states $\ket{f_1, \ldots, f_{n_f}}$ where $f_k \in \{0, 1\}$ are occupation numbers. Given the system has exactly $m$ fermions and $\{j_1, \ldots, j_m\}$ is the set of indices $j$ for which $f_j = 1$, then 
\begin{equation}
\ket{f_1, \ldots, f_{n_f}} = \Lambda_{\rm as} \left( \ket{\lambda_{j_1}, \ldots, \lambda_{j_m}} \right).
\end{equation}

Fermionic operators acting on $\mathcal{H}_f$ can be conveniently expressed in terms of creation and annihilation operators, $a_j^\dagger$ and $a_j$, respectively. These obey the canonical anti-commutation relations
\begin{align}
    \{a_i,a_j\} &=  \{a_i^{\dagger},a_j^{\dagger}\} = 0\\
     \{a^{\dagger}_i,a_j\} &= \delta_{ij} \mathbf{I},
\end{align}
where $\delta_{ij}$ is the Kronecker delta and $\mathbf{I}$ is the identity operator. 
In contrast, the qubit space $\mathcal{H}_q$ is a $2^{n_q}$-dimensional Hilbert space spanned by computational basis states $\ket{q_1, \ldots, q_n}$, where each $q_j \in \{0, 1\}$ with $j= [n_{\rm q}]$.

Given the fermionic and qubit Hilbert spaces $\mathcal{H}_f$ and $\mathcal{H}_{q}$, a \emph{fermion-to-qubit map} is an isometry (i.e. linear map preserving inner products) $\phi \colon \mathcal{H}_f \to \mathcal{H}_q$.
In the following, we use $\im$ to denote the image of a map and $\mathcal{L}(\mathcal{H})$ to denote the space of operators acting on $\mathcal{H}$. 
Given a fermionic operator $\mathcal{O}_f \in \mathcal{L}(\mathcal{H}_f)$, the corresponding qubit operator $\mathcal{O}_q \in \mathcal{L}(\im \phi)$ under this map is defined as 
\begin{equation}
\mathcal{O}_q = \phi \circ \mathcal{O}_f \circ \phi^{-1}
\label{eq:fermion-op-map}
\end{equation}
so that it acts on qubit states in the same manner as $\mathcal{O}_f$ acts on fermionic states:
\begin{equation}
\phi(\mathcal{O}_f\ket{\psi}_f) = \mathcal{O}_q \phi(\ket{\psi}_f).
\end{equation}
We denote by $\Phi \colon \mathcal{L}(\mathcal{H}_f) \to \mathcal{L}(\im \phi)$ the induced operator mapping $\mathcal{O}_f \mapsto \mathcal{O}_q$ defined in Eq. (\ref{eq:fermion-op-map}).

In this paper we focus on the case where $n_{\rm f}=n_{\rm q}$, in which case $\phi$ is a surjective isometry (i.e., a unitary transformation). Nevertheless, we remark that our methods fully generalize to cases where $n_{\rm f}\neq n_{\rm q}$ \cite{Steudtner_2018,Setia_2019,nys2022variational,derby2021compact,fischer2019symmetry}. In some applications, the relevant space of interest is a strict subspace of the fermionic Hilbert space of $n_{\rm f}$ modes ( e.g. if the Hamiltonian of interest has certain symmetries which restrict the permissible Hilbert space ) \cite{Steudtner_2018,fischer2019symmetry}
. In such cases, one can obtain maps with $n_{\rm f}> n_{\rm q}$. On the other hand, there are also benefits to consider maps with $n_{\rm f}< n_{\rm q}$, such as introducing error correction \cite{Setia_2019,nys2022variational,derby2021compact}. 
Moving forward, we take $n_f = n_q = n$.

The most elementary fermion-to-qubit mapping is the Jordan Wigner transformation (JWT), which maps fermionic states to qubit states by defining $q_j = f_j$ for each $1 \le j \le n$. The induced operator mapping $\Phi$ maps individual creation and annihilation operators to  strings of Pauli operators
\begin{align}
\Phi_{\rm JW}(a_j) = \frac{1}{2} Z_1 \cdots Z_{j-1}(X_j - i Y_j).
\end{align}

Crucially, because of the non-trivial exchange statistics associated with the fermionic degrees of freedom, their representation in terms of their qubit counterparts involve non-local operations. The non-locality of the resulting qubit operators poses a challenge for implementing the quantum algorithms that involve chemistry Hamiltonians \textcolor{red}{\cite{tranter2015b}}. 
Conventionally, the non-locality of the qubit operators can be quantified by first expressing the qubit operator as a sum of Pauli strings and then calculating the average Pauli weight.
Here, a \emph{Pauli string} is a tensor product operator over $n_{\rm q}$ qubits where each factor is a Pauli $X$, $Y$, $Z$, or $I$. The Pauli strings acting on $n_{\rm q}$ qubits form a complete basis for all operators that act on the system. The \emph{Pauli weight} of a single Pauli string $P$, which we denote by $\wt(P)$, is the number of factors that are not $I$. 
When considering the Pauli weight associated with an operator $\mathcal{O}_q$, which is a linear combination of multiple Pauli strings $\{P_j\}$, 
\begin{align}
\mathcal{O}_q = \sum_{j=1}^{\ell} c_j P_j, \quad \quad c_j \in \mathbf{C}/\{0\},
\end{align}
then the \emph{average Pauli weight} is given by
\begin{equation}
\wt(\mathcal{O}_q) = \frac{1}{\ell} \sum_{j=1}^\ell \wt(P_j).
\end{equation}

It is possible to evaluate the efficiency of different fermion-to-qubit mappings based on a comparison of the average or worst-case weights for $\wt(\Phi(a_j))$ across all $1 \le j \le n$. For the JWT, both metrics are $O(n)$.
This is reduced dramatically if we consider mappings such as the BKT \cite{Tranter_2018} and its ternary-tree-based generalizations \cite{Jiang_2020}. These mappings are all near-optimal in that the Pauli weight averaged over single creation and annihilation operators grows as $O(\log n)$. In addition, Jiang et al. \cite{Jiang_2020} found ternary-tree-based mappings with an average Pauli weight of $\log_3{(2n)}$, and proved that it is not possible to construct a fermion-to-qubit mapping whose average Pauli weight over single creation and annihilation operators is smaller than the optimal ternary-tree mapping. Determining whether the ternary-tree-based mappings are optimal in more general contexts is one of the main goals of our work.

\subsection{Design of fermion-to-qubit mappings through unitary transformations}
\label{sec:unitary}

In this work, we are interested in qubit operators that correspond to electronic structure Hamiltonians, which have the form \cite{Tranter_2018}
\begin{equation}
H_f = \sum^N_{i,j} h_{ij} a_i^\dagger a_j + \sum_{i,j,k,l}^N h_{ijkl} a_i^\dagger a_j^\dagger a_k a_l,
\label{eq:elec-struc-ham}
\end{equation}
where $N$ is the number of single-particle orbitals. Note that the physical Hamiltonian $H_f$ contains only operators that conserve the fermion number \cite{bravyi2002fermionic}.
We refer to the $a_i^\dagger a_j$ terms as one-body \textit{hopping terms} and the $a_i^\dagger a_j^\dagger a_ka_l$ as two-body \textit{interaction terms}.

The goal is to find an efficient fermion-to-qubit mapping $\phi$ which minimizes the resulting Pauli weight of $H_f$ under the induced operator mapping $\Phi$:
\begin{equation}
\argmin_{\phi \colon \mathcal{H}_f \to \mathcal{H}_q} \wt(\Phi(H_f)).
\end{equation}
Thus, our approach for designing fermion to qubit mappings take is to numerically search the space of $\phi$ to minimize the cost function wt given a fermionic Hamiltonian $H$.

The space of valid fermion-to-qubit mappings is fully specified through unitary transformations acting on an initial fermion-to-qubit mapping. This fact is summarized in the following lemma.

\begin{lemma}\label{thm:unitary-transform}
Let $\phi_1, \phi_2 \colon \mathcal{H}_f \to \mathcal{H}_q$ be two fermion-to-qubit maps on a given fermion and qubit space. Then there exists a unitary $U \in \mathcal{L}(\mathcal{H}_q)$ such that $\phi_2 = U \phi_1$. The corresponding operator maps $\Phi_1$ and $\Phi_2$ transform as 
\begin{equation}
\Phi_2(\mathcal{O}_f) = U \Phi_1(\mathcal{O}_f) U^\dagger.
\label{eq:unitary-transform}
\end{equation}
\end{lemma}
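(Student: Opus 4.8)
The plan is to construct $U$ explicitly and then verify both assertions by direct substitution into the definition of the induced operator map, Eq.~\eqref{eq:fermion-op-map}. Since the excerpt fixes $n_{\rm f}=n_{\rm q}=n$, both $\phi_1$ and $\phi_2$ are surjective isometries, hence unitaries; in particular $\phi_1$ is invertible with $\phi_1^{-1}=\phi_1^\dagger$. I would therefore define
\begin{equation}
U \coloneqq \phi_2\,\phi_1^{-1} = \phi_2\,\phi_1^\dagger,
\end{equation}
so that $U\phi_1 = \phi_2\phi_1^{-1}\phi_1 = \phi_2$ holds by construction. The first claim then reduces to checking that this $U$ is a unitary on $\mathcal{H}_q$.

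The second step is exactly that verification. Using the isometry relations $\phi_i^\dagger\phi_i=\mathbf{I}_{\mathcal{H}_f}$ together with surjectivity $\phi_i\phi_i^\dagger=\mathbf{I}_{\mathcal{H}_q}$, a one-line computation gives $U^\dagger U = \phi_1(\phi_2^\dagger\phi_2)\phi_1^\dagger = \mathbf{I}_{\mathcal{H}_q}$ and likewise $UU^\dagger=\mathbf{I}_{\mathcal{H}_q}$, so $U$ is unitary. Finally, substituting $\phi_2=U\phi_1$ into $\Phi_i(\mathcal{O}_f)=\phi_i\,\mathcal{O}_f\,\phi_i^{-1}$ and using $U^{-1}=U^\dagger$ yields
\begin{equation}
\Phi_2(\mathcal{O}_f)=U\phi_1\,\mathcal{O}_f\,\phi_1^{-1}U^\dagger = U\,\Phi_1(\mathcal{O}_f)\,U^\dagger,
\end{equation}
which is Eq.~\eqref{eq:unitary-transform}.

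None of this is hard; the whole statement is essentially a change-of-basis observation. The one place that genuinely needs care --- and the only obstacle a reader might worry about --- is that $U=\phi_2\phi_1^\dagger$ is \emph{a priori} only pinned down on $\im\phi_1$. In the surjective case treated here $\im\phi_1=\mathcal{H}_q$, so $U$ is automatically defined on all of $\mathcal{H}_q$ and the issue evaporates. For the general $n_{\rm f}\neq n_{\rm q}$ setting mentioned earlier in the text one would instead obtain a partial isometry that must be extended to a genuine unitary on $\mathcal{H}_q$; this is always possible because the orthogonal complements $(\im\phi_1)^\perp$ and $(\im\phi_2)^\perp$ share the common dimension $n_{\rm q}-n_{\rm f}$, so I would flag this extension as the single nontrivial ingredient of a fully general proof.
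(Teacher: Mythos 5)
Your proposal is correct and follows essentially the same route as the paper: both define $U=\phi_2\phi_1^{-1}$, verify unitarity from the isometry property (with the same remark that in the non-surjective case one extends $U$ from $\im\phi_1$ to all of $\mathcal{H}_q$ by matching orthonormal bases of the complements), and obtain Eq.~\eqref{eq:unitary-transform} by direct substitution. No gaps.
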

Moreover, given a fermion-to-qubit map $\phi$, and a unitary $U$, $U\phi$ is a valid fermion-to-qubit map.

\begin{proof}
Consider the operator $U = \phi_2 \phi_1^{-1}$. Since $\phi_1$ and $\phi_2$ are isometries, $U$ is a unitary transformation from $\im(\phi_1)$ to $\im(\phi_2)$. If the $\phi$ were not surjective, then extend $\im(\phi_1)$ and $\im(\phi_2)$ each into an orthonormal basis for $\mathcal{H}_q$. Consequently, mapping the extended bases to each other extends $U$ to a unitary on $\mathcal{H}_q$. For the operator maps, we have
\begin{equation}
\begin{aligned}
\Phi_2(\mathcal{O}_f) &= \phi_2 \mathcal{O}_f \phi_2^{-1} \\ &= U \phi_1 \mathcal{O}_f \phi_1^{-1} U^\dagger \\
&= U \Phi_1(\mathcal{O}_f) U^\dagger. \qedhere
\end{aligned}
\end{equation}
\end{proof}

Therefore, instead of designing fermion-to-qubit maps from scratch, we may take any initial mapping of operators $\Phi_1$ and apply unitary transformations. 
From this perspective, it is straightforward to formulate the following optimization problem to find a fermion-to-qubit mapping that minimizes wt: 
\begin{problem}[Optimal fermion-to-qubit mapping]
    Given an initial $n$-qubit Hamiltonian $H_q$ find a unitary transformation $U$ that optimizes the cost function
\begin{align}
    \argmin_{U \in \mathrm{U}(2^{n})} \mathrm{wt}(UH_qU^\dagger).
    \label{eq:Minimization General}
\end{align}
Here, $H_q = \Phi_0(H_f)$ is a qubit Hamiltonian and $\Phi_0$ is any existing fermion-to-qubit map, such as JWT or BKT.
\end{problem} 

Although we focus solely on the Pauli weight of the resulting Hamiltonian $H_q$ as our cost function, we emphasize that the optimization approach allows for searching an effective fermion-to-qubit mapping given a broader range of constraints on a possible implementation scheme, such as connectivity of and range of interactions between qubits \cite{cross2019validating}. 
Furthermore, our approach is not limited to Hamiltonians; we can apply fermionic maps and unitary transformations on any fermionic operator $\mathcal{O}_f$, even non-Hermitian operators. For example, the fermionic operator
\begin{equation}
\mathcal{O}_f = \sum_{i=1}^n a_i,
\label{eq:single-ops}
\end{equation}
exactly reproduces the standard notion of average Pauli weight across individual fermionic operators.


\subsection{Clifford transformations}
\label{sec:clifford}

The expressivity of our scheme comes with many challenges familiar to variational quantum algorithms \cite{holmes2022connecting,cerezo2021variational}. In particular, the exponential number of parameters involved in describing a generic $n_q$-qubit unitary will result in the emergence of barren plateaus as well as a plethora of local minima in the landscape of the cost function. Moreover, the evaluation of the cost function itself is computationally expensive for a generic unitary transformation. To address these challenges, in the next subsection, we constrain the space of fermion-to-qubit transformations to subsets of Clifford transformations, whose action on the qubit Hamiltonian can be efficiently simulated.

First, we restrict our search to the space to unitaries $U_{\rm Clifford}$ in the Clifford group, which can be expressed as 
\begin{align}
    U_{\rm Clifford}=G_M\cdots G_1 G_0 \equiv \Tilde{\prod}^M_{n=1} G_n.
\label{eq:clifford-unitary}
\end{align}
Here, $\tilde{\Pi}_n$ denotes the ordered product of the Clifford gates $G_n$ which are chosen from the set
\begin{align}
    \mathcal{S}_{\rm Clifford} = \{{\rm H}_i, {\rm S}_i,{\rm CNOT}_{ij} \mid 1 \le i \ne j \le n\},
\end{align}
consisting of the Hadamard gate H, phase gate S, and the controlled-NOT gate CNOT$_{ij}$, with the indices $i^{\rm th}$ and $j^{\rm th}$ referring to the control and target qubits, respectively. 

The Clifford group allows us to leverage its well-known properties to expedite the numerical optimization of Eq. ($\ref{eq:Minimization General}$). Most importantly, the action of these transformations on Pauli strings can be  efficiently simulated by a classical computer, via the Gottesman-Knill theorem \cite{gottesman1998heisenberg,aaronson2004improved}, circumventing the need to perform exponentially large matrix multiplications. In addition, a Clifford circuit maps a single Pauli string to a single Pauli string, preserving the total number of terms. Indeed, the converse also holds. That is, the Clifford group is the unitary normalizer of the $n$-qubit Pauli group \cite{nielsen2010quantum}.
Lastly, the Clifford group has a simpler representation of the search space as it is generated by a small discrete set of gates.

We can further restrict the set of Clifford generators without a drastic effect on the performance of the optimal fermion-to-qubit mapping. To this end, observe that 
while the adjoint action of CNOT gates can reduce the weight of a Pauli string (see Fig. \ref{fig:cnot}), the single-qubit Hadamard and phase gates do not. Motivated by this observation, we consider three potential gate sets and their combinations. First, we consider 
\begin{align}
    \mathcal{S}_{C} &= \{{\rm CNOT}_{ij} \mid 1 \le i \ne j \le n\} \label{eq: SC}.
\end{align}
Then, as a way of extending the search space, we consider the union of $\mathcal{S}_C$ with the following sets of generators
\begin{align}
\mathcal{S}_{H} &= \{{\rm CNOT}_{ij} {\rm H}_i \mid 1 \le i \ne j \le n\}\\
\mathcal{S}_{S} &= \{{\rm CNOT}_{ij} {\rm S}_i \mid 1 \le i \ne j \le n\}
\end{align}
In the following, we consider the following unions of these restricted gate sets:
\begin{align}
    \mathcal{S}_{CH} &=\mathcal{S}_{C}\cup\mathcal{S}_{H} \\ 
    \mathcal{S}_{CHS} &=\mathcal{S}_{C}\cup\mathcal{S}_{H} \cup\mathcal{S}_{S}. \label{eq: SCHS}
\end{align}
We further discuss the structure of gate sets $\mathcal{S}_{C}$, $\mathcal{S}_{CH}$, and $\mathcal{S}_{CHS}$, and provide a performance evaluation in  Sec.~\ref{sec:single-ops} and Sec.~\ref{sec:comparison}.

We defer to Appendix \ref{sec:algorithms} for a detailed discussion of the simulated annealing algorithm used to optimize the cost function in Eq. ($\ref{eq:Minimization General}$). In brief, our optimization algorithm builds a Clifford unitary $U_{\rm Clifford}$ iteratively by adding gates from a specified subset of $\mathcal{S}_{\rm Clifford}$ to an initial identity operation. Our numerical simulations demonstrate that $\mathcal{S}_{CHS}$ does not provide a substantial improvement compared to $\mathcal{S}_{CH}$ when it comes to reducing the average Pauli weight, although $\mathcal{S}_{CHS}$ can generate all Clifford transformations while $\mathcal{S}_{CH}$ cannot.

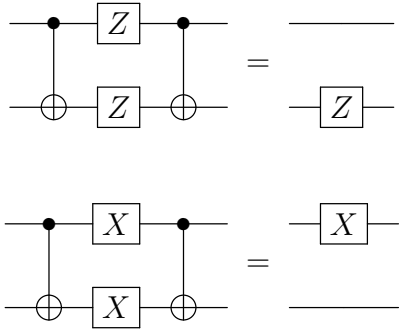
\begin{figure}[t]
\[
\Qcircuit @R=0.7em @C=1.0em {
& \ctrl{2}	& \gate{Z}	& \ctrl{2} 	& \qw 	& 	& & \qw & \qw \\
& 			&			&			&		& = & &	&	&	\\	
& \targ 	& \gate{Z}	& \targ 	& \qw 	& 	& & \gate{Z} & \qw
}
\]

\[
\Qcircuit @R=0.7em @C=1.0em {
& \ctrl{2}	& \gate{X}	& \ctrl{2} 	& \qw 	& 	& & \gate{X} & \qw \\
& 			&			&			&		& = & &	&	&	\\	
& \targ 	& \gate{X}	& \targ 	& \qw 	& 	& & \qw & \qw
}
\]
\caption{The conjugation action of CNOT reduces the weight of some Pauli strings}
\label{fig:cnot}
\end{figure}

\section{Results and Discussion}
\label{sec:results}

In this section, we present a detailed characterization of the space of mappings that is reachable through the application of Clifford unitaries on a Jordan-Wigner mapping, as well as numerical evidence demonstrating how our optimization approach results in fermion-to-qubit mappings that overperform the state-of-the-art mappings in the context of condensed-matter and chemistry Hamiltonians. In Sec.~\ref{sec:single-ops}, we analytically show that Clifford gates can map any initial ternary-tree mapping to a desired ternary-tree mapping. 
Moreover, the Clifford gates also allow one to explore non-ternary-tree mappings, a property that proves to be very useful when considering fermionic Hamiltonians with exchange interactions.

In the remaining three subsections, we present numerical results for hopping models, Hubbard models, and hydrogen chains. Each optimized mapping was obtained in at most 3 days of simulated annealing computation on a single Intel Xeon Silver 4216 CPU. Though each data point is the best of many runs, each run is independent of other runs, and thus can be performed in parallel on a cluster.

\subsection{Comparison to ternary-tree mapping}
\label{sec:single-ops}

In this section, we consider the optimization problem under the fermionic operator $\mathcal{O}_f$ in Eq. (\ref{eq:single-ops}), for which the ternary-tree is known to be optimal\cite{Jiang_2020}. We are able to recover its optimal average Pauli weight by applying our simulated annealing algorithm with $\mathcal{S}_C$ (only CNOT gates) to $\mathcal{O}_f$, starting from either JWT or BKT. This result demonstrates that, for $\mathcal{O}_f$, our optimization algorithm finds the global minimum despite being restricted to Clifford transformations and a lack of provable guarantees.

The success of our approach can be explained through a theoretical analysis of our algorithm's search space with respect to the broad class of mappings described by ternary trees, which includes both the JWT and BKT, as well as the optimal balanced ternary tree. We show that our algorithm using $\mathcal{S}_{CHS}$ can generate this family of trees in the search space, while also including mappings outside of this family. Furthermore, we show that the search space using $\mathcal{S}_C$ can generate any tree shape, up to permutations of the qubit and leaf labels.

To this end, we will use the standard language of trees and graph theory, which we briefly review. We study \emph{full} ternary trees, meaning each node has either $0$ or $3$ children. A node with $0$ children is a \emph{leaf}, and a node with $3$ children is a \emph{parent}. If there are $n$ parents, then there are $2n+1$ leaves. A fermion-to-qubit mapping arises\cite{Jiang_2020, bonsai2024miller} from a ternaryvtree with the following information:
\begin{enumerate}
\item A full ternaryvtree, which we call the \emph{shape} of the ternary-tree mapping.
\item A labeling of the $n$ parents by $1, \ldots, n$. Each labeled parent corresponds to a qubit.
\item A labeling of the $2n+1$ leaves by $\gamma_1, \ldots, \gamma_{2n+1}$. These label the Majorana operators.
\end{enumerate}
Each leaf $\gamma_j$ is associated with a Pauli string $P_{\gamma_j}$ by traversing down the tree from the root to $\gamma_j$. Along the path down to $\gamma_j$, taking the left, middle, or right child of qubit $k$ appends to $P_{\gamma_j}$ a Pauli $X_k$, $Y_k$, or $Z_k$, respectively. The mapping is then defined by
\begin{equation}
\Phi(a_k) = P_{\gamma_{2k-1}} + i P_{\gamma_{2k}}
\end{equation}
for each $1 \le k \le n$. The last leaf, $\gamma_{2n+1}$, is redundant. An important property of ternary-tree mappings is that
each pair of distinct Pauli strings $P_{\gamma_j}, P_{\gamma_k}$ anticommute on exactly one qubit. 

At this point, we make a distinction between the general class of mappings described by ternary trees, and the specific ternary-tree mapping that optimizes the average Pauli weight of Eq. (\ref{eq:single-ops}). What we have described so far is a general ternary-tree mapping, which can express many common mappings.\cite{Vlasov_2022, bonsai2024miller} Indeed, the Jordan-Wigner tree is described by a spine of right children, and the Bravyi-Kitaev tree is formed from the first $n$ nodes from a preorder traversal of a perfect binary tree with height $\floor{\log_2 n}$. 
In both cases, the qubits are labeled by an inorder traversal of the parents, and the leaves are labeled by an inorder traversal of the leaves, as shown in Figures \ref{fig:jw-tree} and \ref{fig:bk-tree}. We refer to any standard algorithms text\cite{clrs} for an exposition on tree traversals.
\begin{figure}[t]
\centering
\begin{tikzpicture}
[style = {level distance = 1cm}]
\node (root) [draw, circle] {1}
    child {node {$\gamma_1$}} child {node {$\gamma_2$}} child {node [draw, circle] {2}
    child {node {$\gamma_3$}} child {node {$\gamma_4$}} child {node [draw, circle] {3}
    child {node {$\gamma_5$}} child {node {$\gamma_6$}} child {node {$\gamma_7$}}
    }};
\end{tikzpicture}
\caption{Jordan-Wigner tree on $3$ qubits}
\label{fig:jw-tree}
\end{figure}
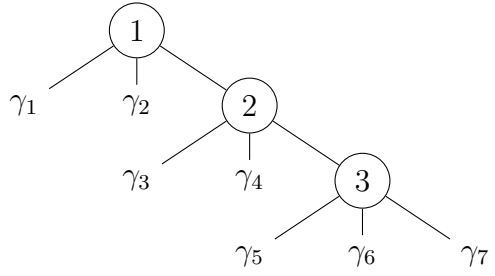
\begin{figure}[t]
\centering
\begin{tikzpicture}
[style = {level distance = 1cm}, level 2/.style = {sibling distance = 1cm}, level 3/.style = {sibling distance = 0.7cm}]
\node (root) [draw, circle] {4}
    child {node [draw, circle] {2}
        child {node [draw, circle] {1}
            child {node {$\gamma_1$}} child {node {$\gamma_2$}} child {node {$\gamma_3$}}
        } 
        child {node {$\gamma_4$}}
        child {node [draw, circle] {3}
            child {node {$\gamma_5$}} child {node {$\gamma_6$}} child {node {$\gamma_7$}}
        }
    }
    child {node {$\gamma_8$}}
    child {node [draw, circle] {5}
        child {node {$\gamma_9$}} child {node {$\gamma_{10}$}} child {node {$\gamma_{11}$}}
    };
\end{tikzpicture}
\caption{Bravyi-Kitaev tree on $5$ qubits}
\label{fig:bk-tree}
\end{figure}
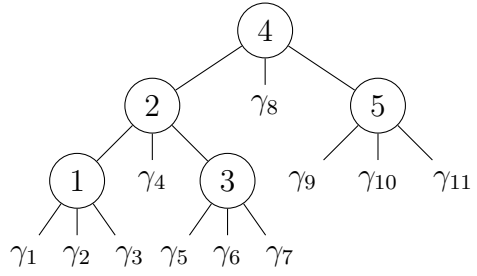

In contrast, the optimal ternary tree described described by Jiang et al.\cite{Jiang_2020} is a balanced ternary tree, where each layer from top to bottom is completely filled with parent nodes before moving on to the next. Though not specified by Jiang et al., for our purposes we refer to \emph{the optimal ternary-tree mapping} as the mapping given by the balanced ternary tree with nodes flushed right, qubits labeled from top to bottom then left to right, and leaves labeled in an inorder traversal, as shown in Figure \ref{fig:ternary-tree}.
We emphasize that neither the position of the parents in the bottom layer nor the labeling of the nodes and leaves affect the Pauli weight of the mapping associated with the tree, averaged over single creation and annihilation operators.

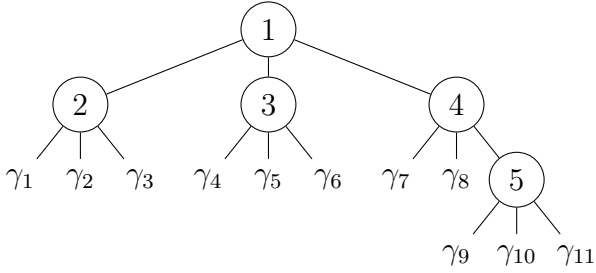
\begin{figure}[t]
\centering
\begin{tikzpicture}
[style = {level distance = 1cm}, level 1/.style = {sibling distance = 2.5cm}, level 2/.style = {sibling distance = 0.8cm}]
\node (root) [draw, circle] {1}
    child {node [draw, circle] {2}
        child {node {$\gamma_1$}} child {node {$\gamma_2$}} child {node {$\gamma_3$}}
    } 
    child {node [draw, circle] {3}
        child {node {$\gamma_4$}} child {node {$\gamma_5$}} child {node {$\gamma_6$}}
    }
    child {node [draw, circle] {4}
        child {node {$\gamma_7$}} child {node {$\gamma_{8}$}} child {node [draw, circle] {5}
            child {node {$\gamma_9$}} child {node {$\gamma_{10}$}} child {node {$\gamma_{11}$}}
        }
    };
\end{tikzpicture}
\caption{An optimal ternary-tree on $5$ qubits}
\label{fig:ternary-tree}
\end{figure}

We are interested in how qubit operators transform under Clifford conjugation. Since each Pauli string in the Hamiltonian corresponds to a product of Majorana operators, the transformation can be understood by studying how the Paulis $\{P_{\gamma_j}\}$ transform under Cliffords. Concretely, given a Clifford $U$, we study the map $\gamma_k \mapsto UP_{\gamma_k}U^\dagger$.
In some cases, the resulting Paulis $UP_{\gamma_k}U^\dagger$ are also given by a tree. In the following, we will call this the transformation of a ternary-tree mapping by the Clifford $U$.
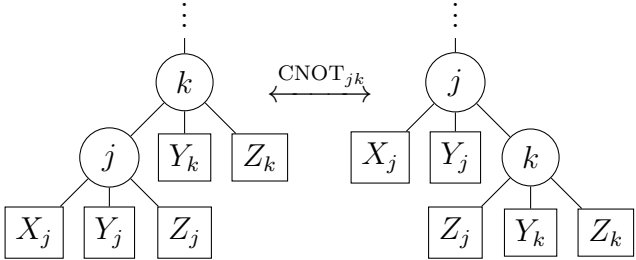
\begin{figure}[h]
\centering
\begin{tikzpicture}
[style = {level distance = 1cm}, level 2/.style = {sibling distance = 1cm}]
\node (before) [xshift=-1.8cm] {$\vdots$} child {
node [draw, circle] {$k$}
    child {node [draw, circle] {$j$}
        child {node [draw] {$X_j$}}
        child {node [draw] {$Y_j$}}
        child {node [draw] {$Z_j$}}
    } 
    child {node [draw] {$Y_k$}}
    child {node [draw] {$Z_k$}}
};
\node [yshift=-1cm] {$\xleftrightarrow{\CNOT_{jk}}$};
\node [xshift=1.8cm] {$\vdots$} child {
node [draw, circle] {$j$}
    child {node [draw] {$X_j$}}
    child {node [draw] {$Y_j$}}
    child {node [draw, circle] {$k$}
        child {node [draw] {$Z_j$}}
        child {node [draw] {$Y_k$}}
        child {node [draw] {$Z_k$}}
    } 
};
\end{tikzpicture}
\caption{Tree rotation corresponding to the adjoint action of CNOT$_{jk}$ transformation. The square boxes can be leaves or contain further subtrees}
\label{fig:tree-rotation}
\end{figure}

\begin{figure}[h]
\centering
\begin{tikzpicture}
[style = {level distance = 1cm}, level 2/.style = {sibling distance = 1cm}]
\node (before) [xshift=-1.8cm] {$\vdots$} child {
node [draw, circle] {$k$}
    child {node [draw] {$X_k$}}
    child {node [draw, circle] {$j$}
        child {node [draw] {$X_j$}}
        child {node [draw] {$Y_j$}}
        child {node [draw] {$Z_j$}}
    } 
    child {node [draw] {$Z_k$}}
};
\node [yshift=-1cm] {$\xleftrightarrow{\CNOT_{jk}}$};
\node [xshift=1.8cm] {$\vdots$} child {
node [draw, circle] {$k$}
    child {node [draw] {$X_k$}}
    child {node [draw] {$Z_j$}}
    child {node [draw, circle] {$j$}
        child {node [draw] {$Y_j$}}
        child {node [draw] {$X_j$}}
        child {node [draw] {$Z_k$}}
    } 
};
\end{tikzpicture}
\caption{A CNOT gate on a middle child moves it to a right child and rotates some other children. The square boxes can be leaves or contain further subtrees}
\label{fig:tree-middle}
\end{figure}
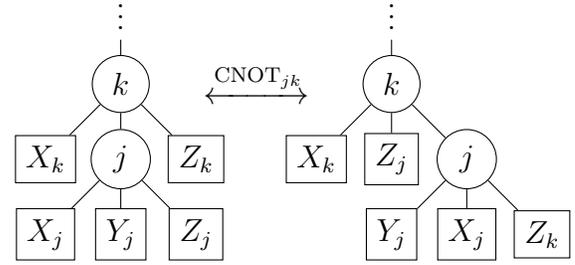

The key transformation is that induced by a $\CNOT$, which rotates adjacent nodes, as shown in Figures \ref{fig:tree-rotation} and \ref{fig:tree-middle}. Using these, we can construct a sequence of CNOT gates to transform between any two ternary-tree shapes, as well as the exact JWT and BKT trees.

\begin{theorem}
\label{thm:cnot-tree-shape}
Let $T_1$ and $T_2$ be any two ternary-tree mappings. There exists a sequence of $\CNOT$ gates, $(\CNOT_{c_i, t_i})_{i=1}^{i_{\max}}$, such that the Clifford $U = \prod\limits_{i=1}^{i_{\max}} \CNOT_{c_i, t_i}$ transforms $T_1$ into the tree shape of $T_2$.
\end{theorem}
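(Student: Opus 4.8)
The plan is to fix a canonical tree shape, show that the $\CNOT$-rotations already exhibited reduce \emph{any} tree to it, and then exploit the fact that $\CNOT$ is its own inverse to route $T_1$ to $T_2$ through this common form. The natural canonical target is the \emph{right-spine} (Jordan--Wigner) shape, in which the internal nodes form a single descending chain of right-child edges while every left child and every middle child is a leaf. If I can transform both $T_1$ and $T_2$ into this shape by $\CNOT$ sequences $U_1$ and $U_2$, then running $U_2$ backwards after $U_1$ will carry $T_1$ onto the shape of $T_2$.

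The technical heart is the reduction lemma, which I would prove with a potential-function argument. For an internal node $v$, let $\ell(v)$ be the number of left- and middle-child edges on the path from the root to $v$, and set $\Psi(T)=\sum_{v\ \mathrm{internal}}\ell(v)$. Right-spine trees are exactly those with $\Psi=0$, and $\Psi$ is a nonnegative integer, so it suffices to show that whenever $\Psi>0$ some rotation strictly decreases it. If some internal node is a left child of its (necessarily internal) parent, I apply the rotation of Fig.~\ref{fig:tree-rotation}; tracking the depths of the five attached subtrees shows $\ell$ drops by one for the promoted node and for every internal node in its former left and middle subtrees, and no $\ell(v)$ ever rises. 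If no internal node is a left child but one is a middle child, I apply Fig.~\ref{fig:tree-middle}, which slides that node into the right-child slot; the same bookkeeping again gives a strict net decrease. Exactly one of these two cases applies whenever $\Psi>0$, so after at most $\Psi(T)$ rotations the tree reaches the right spine.

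With the lemma established, $U_1$ and $U_2$ drive $T_1$ and $T_2$ to right-spine trees of \emph{identical shape} but possibly different qubit labelings along the spine. Since $\SWAP_{ij}$ is a product of three $\CNOT$s and conjugation by it merely interchanges the positions of parents $i$ and $j$ without altering the shape, I would append such swaps so that both reductions land on canonical trees with the same parent labeling (leaf labels are irrelevant to the claimed shape). Setting $U=U_2^{\dagger}U_1$, the transformed tree $U$ produces from $T_1$ is obtained by running the reverse rotations and reverse swaps of $U_2^{\dagger}$ on this canonical tree. Because each rotation and swap is dictated solely by the tree's shape and its parent (qubit) labels, and these coincide with the canonical reduction of $T_2$, the reversal reproduces precisely the shape of $T_2$, even though the leaf labels it carries need not match those of $T_2$.

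The hard part will be the reduction lemma, and specifically the depth accounting that certifies $\Psi$ strictly decreases under \emph{both} moves: one must check all five subtrees attached to the rotating pair and confirm that none of their contributions to $\Psi$ increases, so that the single guaranteed decrease at the promoted node is never cancelled. The only other delicate point is conceptual rather than computational, namely the observation that the rotation and swap moves depend only on shape and qubit labeling; once this is noted, matching those data via swaps is enough to guarantee that reversing the $T_2$-reduction recovers the shape of $T_2$ regardless of the leaf labels.
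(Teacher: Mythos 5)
Your proposal is correct and follows essentially the same route as the paper: reduce both trees to the right-spine (Jordan--Wigner) canonical shape via the rotations of Lemmas \ref{lem:tree-rotation} and \ref{lem:tree-middle}, then compose $U_2^\dagger U_1$; your potential function $\Psi$ is just a slightly heavier substitute for the paper's monovariant (the number of parents on the right spine increases by one per rotation). The one substantive difference is your insertion of $\SWAP$ networks to align the qubit labels of the two canonical spines before applying $U_2^\dagger$, and this is a genuine improvement: the paper's proof composes $U_2^\dagger U_1$ directly, but since tree rotations preserve the inorder traversal of the parents, the two reduced spines generally carry \emph{different} qubit orderings, so the gates of $U_2^\dagger$ may act on non-adjacent nodes of $U_1 T_1 U_1^\dagger$ and (as the paper itself notes elsewhere) break the tree structure. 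Your label-alignment step closes that gap at the cost of $O(n)$ extra $\CNOT$s, and your observation that the reversal then reproduces the shape of $T_2$ regardless of leaf labels is exactly the right justification.
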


As mentioned before, the average Pauli weight of \textit{single} fermionic operators, as defined by Eq. (\ref{eq:single-ops}), is determined only by the shape of the tree and not by the labeling of the qubits or Majoranas. Thus, Theorem \ref{thm:cnot-tree-shape} explains why our search consisting of only CNOT gates is always able to attain the optimal single-operator average weight starting from any ternary-tree mapping, including JWT or BKT. As a corollary, we can implement an exact transformation between ternary-tree mappings that share the same inorder transversal of leaves and qubits, such as the JWT and BKT, using only $\CNOT$ gates because the $\CNOT$ tree rotations preserve the inorder traversal of the qubits and leaves.

\begin{theorem}
\label{thm:jw-bk}
There exists a sequence of $\CNOT$ gates, $(\CNOT_{c_i, t_i})_{i=1}^{i_{\max}}$, such that the Clifford $U = \prod\limits_{i=1}^{i_{\max}} \CNOT_{c_i, t_i}$ transforms between the Jordan-Wigner and Bravyi-Kitaev trees on $n$ qubits.
\end{theorem}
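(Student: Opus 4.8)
The plan is to exhibit the $\CNOT$ sequence as a composition of inorder-preserving tree rotations, exploiting the fact that the Jordan--Wigner and Bravyi--Kitaev trees share the same inorder labelings of both qubits and leaves. First I would isolate the one structural feature common to both trees beyond their labelings: in each of them every \emph{middle} child is a leaf, so no parent node ever sits in a middle slot (for the Bravyi--Kitaev tree this follows from its description as the ternary embedding of a binary tree, in which the middle slots are always the extra Majorana leaves). I would then restrict attention to the rotation of Fig.~\ref{fig:tree-rotation}, the $\CNOT$ move that promotes a \emph{left}-child parent $j$ above its parent $k$, and verify two invariants: (i) it preserves the left-to-right (inorder) order of the parent labels and of the leaf labels, and (ii) it sends trees in which every middle child is a leaf to trees with the same property, since the middle children of $j$ and $k$ are untouched by the rotation. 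The move of Fig.~\ref{fig:tree-middle} is deliberately avoided, because it scrambles the leaf order; the content of the theorem is precisely that it is never needed for this pair of trees.

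The combinatorial heart of the argument is a reachability statement: any full ternary tree in which every middle child is a leaf can be brought to the \emph{right spine}---which is exactly the Jordan--Wigner shape---using only the left-child rotation above. I would prove this by adapting the standard binary-search-tree canonicalization argument to the ternary setting: repeatedly right-rotate at the root until its left child is a leaf, then recurse on the root's right subtree. Each root rotation replaces the root's left child by that child's former left child, strictly shrinking the root's left subtree, so the process halts; and because the middle slot is guaranteed to be a leaf throughout, clearing the left child suffices to place the root permanently on the spine. Induction on the number of parents then delivers the full right spine.

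With these pieces in place the theorem follows by concatenation. Applying the reduction to the Bravyi--Kitaev tree yields a sequence of left-child $\CNOT$ rotations reshaping it into the right spine; since every rotation preserves the inorder order of qubits and of leaves, and the Jordan--Wigner tree is by definition the right spine labeled in inorder, the endpoint is \emph{exactly} the Jordan--Wigner mapping rather than merely its shape. As each $\CNOT$ is an involution, the reversed product inverts this transformation and produces a $\CNOT$ circuit carrying Jordan--Wigner to Bravyi--Kitaev.

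I expect the main obstacle to be the reachability/termination step, the analogue of the classical fact that rotations connect all binary search trees with a common inorder. The subtlety specific to the ternary setting is confirming that the middle-leaf invariant is genuinely preserved---so that the inorder-scrambling move of Fig.~\ref{fig:tree-middle} can be dispensed with entirely---and choosing a potential (the size of the root's left subtree, decreasing under repeated root rotations) that makes termination transparent. Everything else is bookkeeping on the labels.
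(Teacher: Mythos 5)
Your proposal is correct and follows essentially the same route as the paper's proof: reduce the Bravyi--Kitaev tree to the right spine (the Jordan--Wigner shape) using only the left-child $\CNOT$ rotation of Lemma~\ref{lem:tree-rotation}, then invoke the fact that these rotations preserve the inorder traversal of both parents and leaves to conclude the result is exactly the Jordan--Wigner mapping, with invertibility of $\CNOT$s giving the reverse direction. Your termination argument (a potential on the root's left-subtree size with recursion down the spine) is organized slightly differently from the paper's count of right-spine nodes, but it is the same underlying mechanism, and your explicit check that the binary-shaped (middle-children-are-leaves) invariant is preserved is a welcome detail the paper leaves implicit.
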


However, in more realistic Hamiltonians, the Majorana labelings do matter, as only products of certain pairs or quadruples of Majoranas are included. The following result states that the broader Clifford group can fix any desired labeling.

\begin{theorem}
\label{thm:trees-included}
For any two ternary-tree mappings $T_1$ and $T_2$, there exists a Clifford circuit which transforms between them.
\end{theorem}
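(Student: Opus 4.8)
The plan is to reduce the statement to a transitivity property of the Clifford action on Majorana operators, and then to settle that property with a short symplectic computation over $\mathbf{F}_2$. First I would observe that a ternary-tree mapping is completely determined by its ordered tuple of Majorana Pauli strings $(P_{\gamma_1}, \ldots, P_{\gamma_{2n}})$, since the induced operator map obeys $\Phi(a_k) = P_{\gamma_{2k-1}} + i P_{\gamma_{2k}}$ and the last leaf $P_{\gamma_{2n+1}}$ is fixed (up to phase) as the product of the others. Because Clifford conjugation preserves Hermiticity, products, and all (anti)commutation relations, ``transforming between $T_1$ and $T_2$'' means producing a Clifford $U$ with $U P^{T_1}_{\gamma_j} U^\dagger = P^{T_2}_{\gamma_j}$ for $1 \le j \le 2n$, the image of $\gamma_{2n+1}$ then being automatic. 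Since the $P_{\gamma_j}$ are pairwise anticommuting Hermitian Paulis that each square to the identity, the theorem reduces to the assertion that the Clifford group acts transitively on ordered $2n$-tuples of mutually anticommuting Hermitian Pauli operators.

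Next I would move to the symplectic picture, identifying each Hermitian Pauli (modulo phase) with a vector in $\mathbf{F}_2^{2n}$ so that anticommutation becomes symplectic inner product $\omega(\cdot,\cdot)=1$. Writing $v_j$ for the vector of $P_{\gamma_j}$, the mutual-anticommutation relations give a Gram matrix $J$ with $J_{ij}=1$ for $i\neq j$ and $J_{ii}=0$. The key observation is that $J$ is \emph{nonsingular} over $\mathbf{F}_2$: one can write $J = I + \mathbf{e}\mathbf{e}^{\top}$ with $\mathbf{e}$ the all-ones vector, whence $\det J = 1 + \mathbf{e}^{\top}\mathbf{e} = 1 + (2n \bmod 2) = 1$. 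Consequently the $2n$ vectors $v_j$ are linearly independent and form a basis of $\mathbf{F}_2^{2n}$, and both $T_1$ and $T_2$ produce bases with the \emph{same} Gram matrix $J$.

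I would then build $U$ in two stages. Since $\{v^{T_1}_j\}$ and $\{v^{T_2}_j\}$ are bases sharing the Gram matrix $J$, the linear map $v^{T_1}_j \mapsto v^{T_2}_j$ preserves $\omega$ and hence lies in $\mathrm{Sp}(2n,\mathbf{F}_2)$; invoking the surjectivity of the natural homomorphism from the Clifford group onto $\mathrm{Sp}(2n,\mathbf{F}_2)$, I lift it to a Clifford $U_0$ satisfying $U_0 P^{T_1}_{\gamma_j} U_0^\dagger = \pm P^{T_2}_{\gamma_j}$. A Pauli correction then fixes the signs: conjugating by a Pauli $Q$ with vector $q$ sends $P^{T_2}_{\gamma_j} \mapsto (-1)^{\omega(q, v^{T_2}_j)} P^{T_2}_{\gamma_j}$, and because $\{v^{T_2}_j\}$ is a basis the map $q \mapsto (\omega(q, v^{T_2}_j))_j$ is a bijection, so some $Q$ realizes any required sign pattern. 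Then $U = Q U_0$ is the desired Clifford. In the tree language this matches the preceding results: Theorem \ref{thm:cnot-tree-shape} supplies the shape-matching part through $\CNOT$ rotations, while the single-qubit generators and the Pauli correction implement the leftover relabeling of qubits and Majoranas.

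The main obstacle is the nondegeneracy of $J$, because once the $v_j$ are known to form a basis everything else is routine linear algebra; the determinant computation is precisely what guarantees both the symplectic change of basis and the simultaneous sign correction. The only other points I would state carefully are the surjectivity of the Clifford-to-symplectic map and the fact that the residual global phase on $\gamma_{2n+1}$ has no effect on the induced operator mapping.
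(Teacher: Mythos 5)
Your proof is correct, but it takes a genuinely different route from the paper's. The paper argues combinatorially at the level of trees: it first uses the construction of Theorem \ref{thm:cnot-tree-shape} to bring both trees to the Jordan--Wigner spine shape, then realizes an arbitrary permutation of qubit labels by $\SWAP$ gates, an arbitrary permutation of the three leaves at a given node by $\mathrm{H}_k$ and $\mathrm{S}_k$, and an exchange of leaves between adjacent spine nodes by the conjugation $\CNOT_{jk}\mathrm{H}_j\CNOT_{jk}$, thereby reaching any Majorana labeling explicitly. You instead pass to the symplectic picture over $\mathbf{F}_2$: the $2n$ Majorana strings of a tree are pairwise anticommuting, their Gram matrix $J=I+\mathbf{e}\mathbf{e}^{\top}$ is nonsingular (so they form a basis), any two such bases are related by an element of $\mathrm{Sp}(2n,\mathbf{F}_2)$, which lifts to a Clifford up to signs, and the signs are killed by a Pauli correction using nondegeneracy of the form. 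Your argument is shorter and strictly more general --- it shows the Clifford group acts transitively on \emph{all} ordered $2n$-tuples of mutually anticommuting Hermitian Paulis, whether or not they arise from a ternary tree, and it handles the sign bookkeeping cleanly, which the paper's proof glosses over. What the paper's constructive proof buys in exchange is an explicit gate sequence drawn from the restricted generator sets $\mathcal{S}_C$, $\mathcal{S}_{CH}$, $\mathcal{S}_{CHS}$ that the optimizer actually uses, together with implicit gate-count information; your appeal to the surjectivity of the Clifford-to-symplectic homomorphism establishes existence only (though standard synthesis algorithms make it effective). The one step you should state and justify rather than assume is the nonsingularity of $J$, which you do via $\det J = 1+\mathbf{e}^{\top}\mathbf{e} = 1$ over $\mathbf{F}_2$ since $2n$ is even; that computation is the crux and it is correct.
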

Theorem \ref{thm:trees-included} implies that the space of all possible ternary-tree mappings is included in our algorithm's search space when using the full set of Clifford gates.

Finally, we emphasize that the transformation of a ternary tree under a Clifford circuit does not always result in another ternary tree. Hence, our search goes beyond the space of ternary trees. For example, a CNOT gate between two non-adjacent nodes in a ternary-tree mapping will in general break the tree structure by producing Majoranas which anticommute on more than one qubit. For some Hamiltonians, our algorithm will return to a ternary-tree mapping even after initially traversing outside the space during the search. Moreover, as we detail in Section \ref{sec:hydrogen}, the fact that our search extends beyond the ternary-tree mappings proves useful for certain classes of chemistry Hamiltonians that include exchange interactions. In particular, for such Hamiltonians, we obtain a mapping that cannot be expressed as a ternary tree, but which outperforms every possible ternary-tree mapping. In the following three sections, we detail the application of our optimal fermion-to-qubit mapping approach on electronic structure Hamiltonians that model lattice systems and hydrogen chains.


\begin{figure*}[h!]
\includegraphics[width=\textwidth]{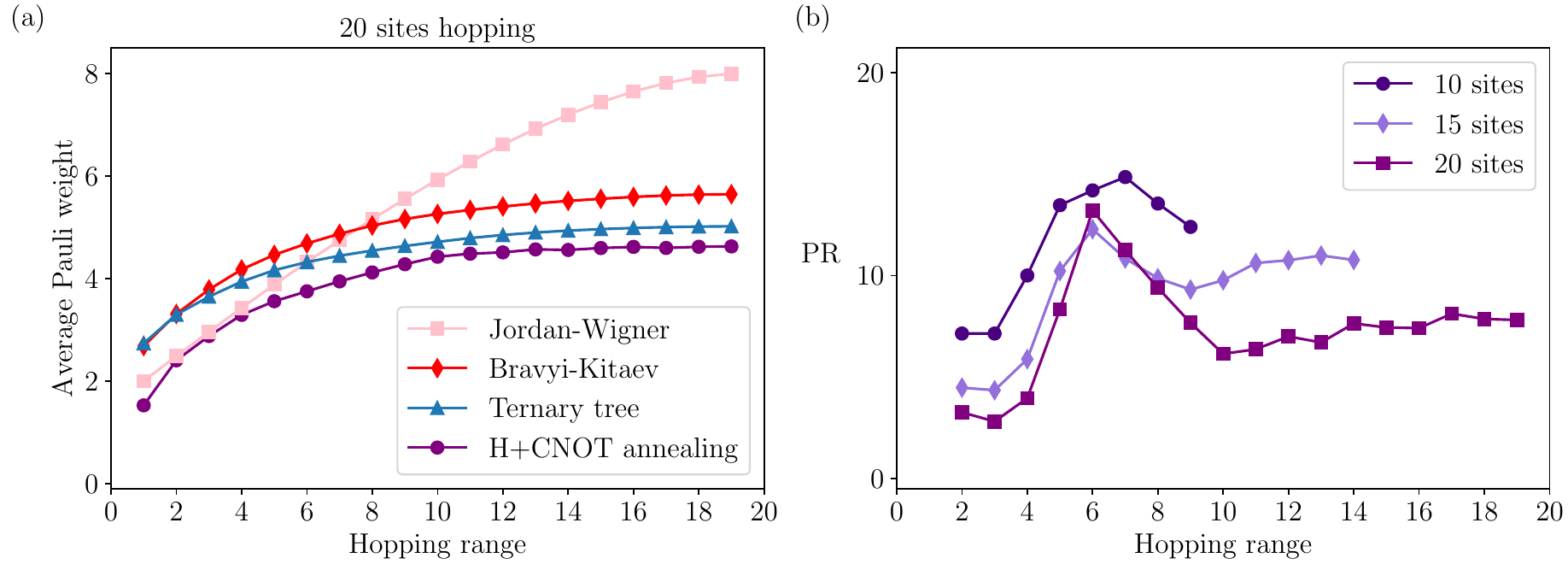}
\caption{For the hopping model in Eq. (\ref{eq:hamiltonian-hopping}) over varying interaction ranges, (a) average Pauli weight of the improved mapping for $20$ qubits resulting from simulated annealing with Hadamard and $\CNOT$ gates, compared against conventional mappings, and (b) percent reduction Pauli weight over the best conventional mapping for $10$, $15$, and $20$ qubits. Remarkably, the highest percent reduction is achieved for intermediate hopping ranges where we observe a crossover between the Jordan Wigner mapping and the Bravyi-Kitaev and balanced ternary-tree mappings. This result indicates that the optimized mapping leverages the structure of the problem with finite-ranged hopping.}
\label{fig:range-avg}
\end{figure*}

\begin{figure*}[h!]
    \centering
    \includegraphics[width=\textwidth]{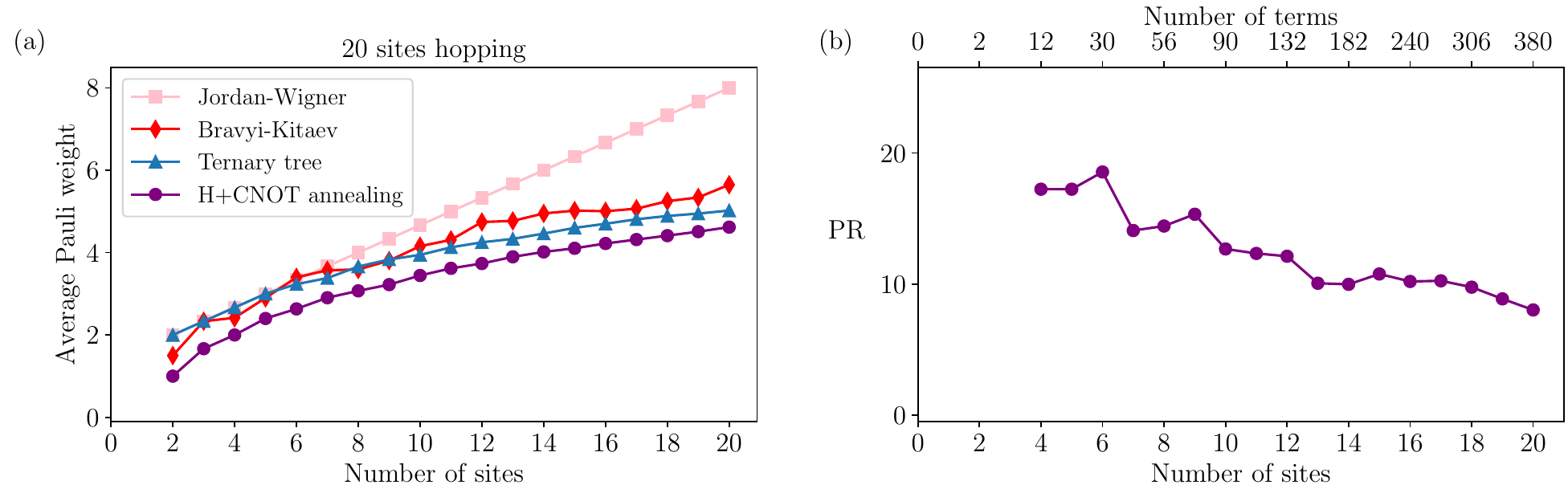}
    \caption{Same as Fig. \ref{fig:range-avg} for all-to-all hopping models, but as a function of system size. The top $x$-axis indicates the number of Hamiltonian terms associated with the bottom $x$-axis.}
    \label{fig:algo-compare}
\end{figure*}

\subsection{1D and 2D hopping models}\label{sec: fermion lattice}
Many fermionic systems in physics are defined on lattices, where the fermionic degrees of freedom associated with each node of the lattice are independent from one another. Moreover, the lattice is commonly embedded in Euclidean space, which provides a natural distance between the lattice nodes. In this setting, the characteristic distance between the nodes that are connected to one-another through hopping or interaction terms (see Section \ref{sec:unitary}) determine the range of hopping or interactions, respectively.

For the sake of a simple exposure to our results, we first consider Hamiltonians which are linear combinations of only hopping terms defined on a lattice. These models provide a convenient benchmark to compare the fermion-to-qubit mappings resulting from our optimization approach to conventional state-of-the-art mappings. Surprisingly, even in the case of simple hopping models, our approach substantially improve the average Pauli weight of conventional fermion-to-qubit mappings, including that of balanced ternary-trees. In the following, we report the improvements resulting from the optimized fermion-to-qubit mappings through the percentage reduction PR$_{\rm conv}$ of the average Pauli weight compared to a conventional mapping $\Phi_{\rm conv}$, defined as
\begin{align}
\mathrm{PR}_{\rm conv} \equiv 1 - \frac{{\rm wt}(\Phi_{\rm opt}(H))}{{\rm wt}(\Phi_{\rm conv}(H))}, 
\end{align}
where $\Phi_{\rm opt}$ denotes the optimized fermion-to-qubit mapping. In the following, we consider $\mathrm{conv}\in \{JWT,BKT, \text{balanced ternary}\}$, and only report the lowest value obtained, denoted ${\rm PR}$.  

Figure \ref{fig:range-avg} shows the results for the hopping model on a one-dimensional lattice with $20$ qubits and varying ranges of hopping terms. Formally, we consider the Hamiltonians of the form
\begin{equation}
\label{eq:hamiltonian-hopping}
H = \sum^N_{0 < \abs{i - j} \le r} a_i^\dagger a_j,
\end{equation}
for various hopping range $0 < r < 20$. Note that we set all coefficients of hopping terms equal to 1. This simplification does not effect our results, because we are using only Clifford unitaries that map one Pauli string to another. 

We observe that given any $r$, the optimized fermion-to-qubit mapping outperforms all conventional mappings. As expected, the Jordan-Wigner mapping is very close to our optimized mapping (all numerical results use the gate set $\mathcal{S}_{CH}$) for short-ranged hopping Hamiltonians. However, for $r>4$, the optimized mapping clearly outperforms all state-of-the-art mappings, including that given by the \emph{optimal ternary-tree mapping}, which provably minimizes the average Pauli weight over all single creation and annihilation operators (see Section \ref{sec:single-ops} for an introduction to ternary-tree mappings). Interestingly, the improvements that we obtain through heuristic optimization peaks around $r=6$ independently from the number of sites in the system. This observation highlights that the optimal mapping leverages a structure of the one-dimensional hopping Hamiltonian with intermediate complexity. Similarly, we see that in the case of all-to-all coupling systems (i.e., $r=N-1$), where we expect to have less structure, the performance of the optimized mapping is the largest around $6$ sites and deteriorates with the increasing number of sites (see Figure \ref{fig:algo-compare} ). Finally, we highlight that the optimized mappings satisfy a necessary condition for ternary-tree mappings: the qubit representation of each single fermion operator anticommute with one another only on a single qubit.

\begin{figure*}[h!]
\includegraphics[width=\textwidth]{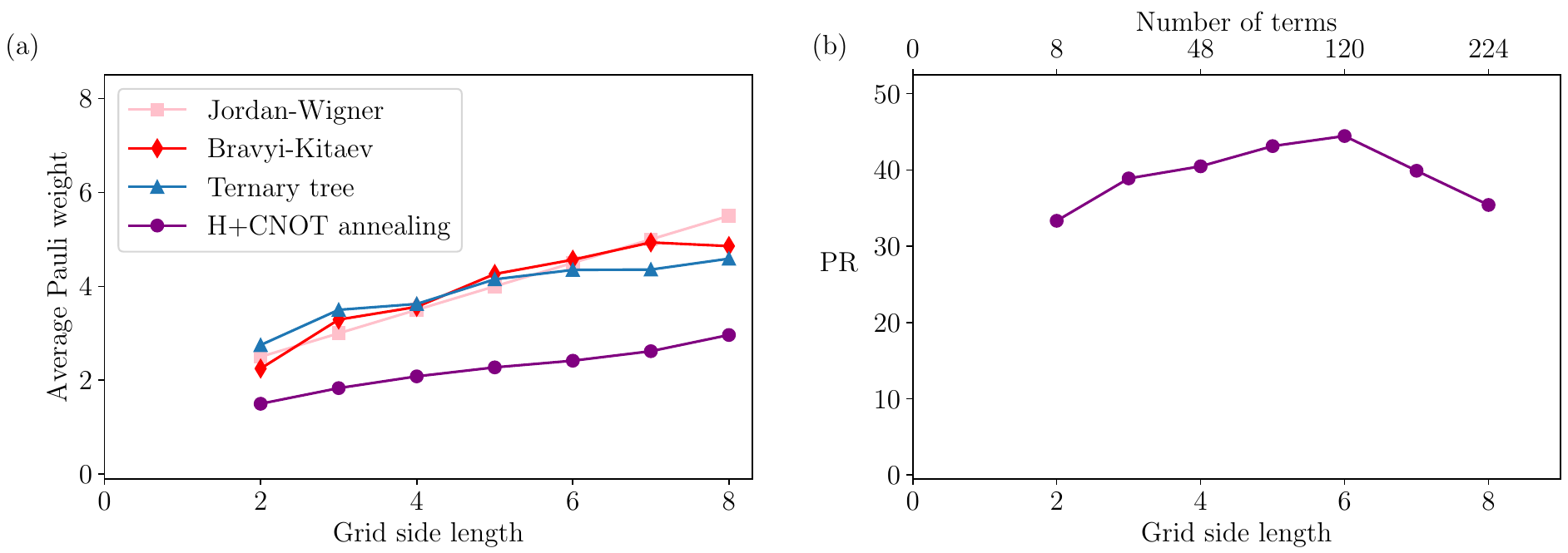}
\caption{Same as Fig. \ref{fig:range-avg} for nearest-neighbor hopping model on a 2D square lattice, as a function of the linear system size (bottom $x$-axis).}
\label{fig:square-avg}
\end{figure*}
\begin{figure*}[h!]
\includegraphics[width=\textwidth]{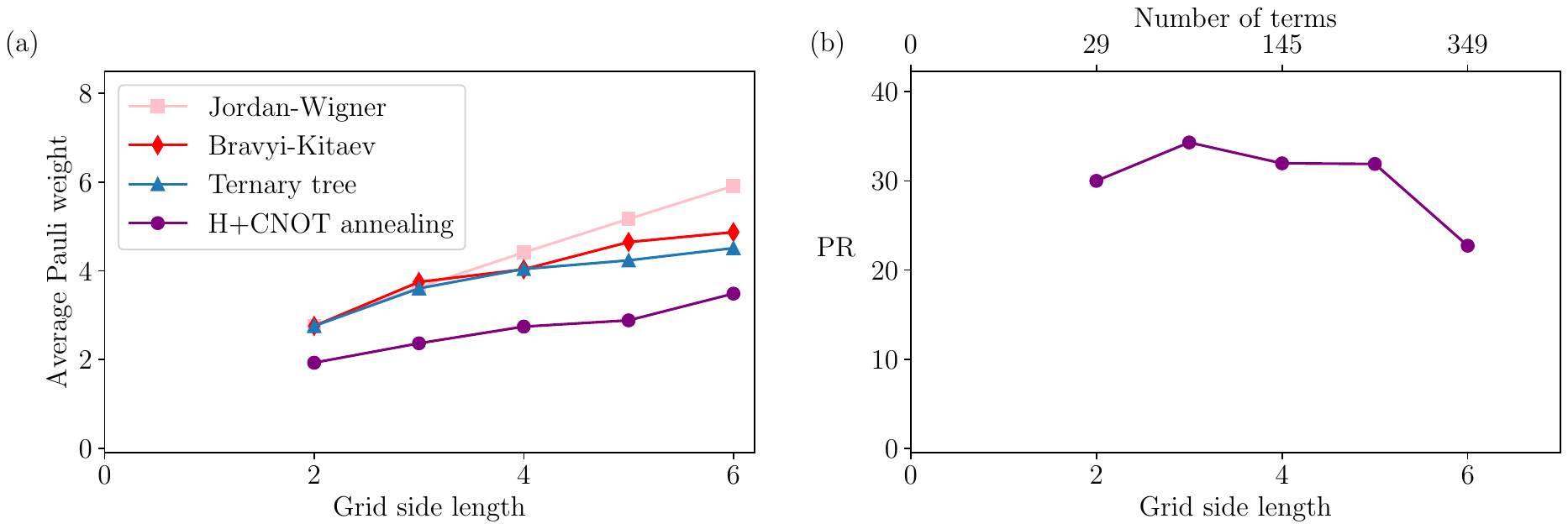}
\caption{Same as Fig. \ref{fig:range-avg} for an interacting 2D Hubbard model on a square lattice, as a function of the linear system size (bottom $x$-axis).}
\label{fig:hubbard-avg}
\end{figure*}

We conclude this subsection with the following interesting observation. The optimized mapping for the nearest-neighbor hopping Hamiltonian (i.e., $r=1$) is that our optimization protocol re-discovers a dual-representation of the nearest-neighbor hopping Hamiltonian  \cite{fradkin2013field}, even in the absence of any externally introduced bias. 
In particular, it is easy to verify that for a one-dimensional lattice with an even number of sites, the optimized mapping results in a qubit Hamiltonian of the form 
\begin{align}
    H_q &= X_1+ \sum_{i=1}^{\frac{N}{2}-1}\left( Z_i Z_{i+1} + X_{i+1}\right)\nonumber\\ 
    &+X_{\frac{N}{2}+1}
    +\sum_{i=\frac{N}{2}+1}^{N-1} \left(Z_i Z_{i+1} + X_{i+1}\right). \label{eq: TFI even}
\end{align}
We note that the average Pauli weight of the representation in Eq. ($\ref{eq: TFI even}$) is $25\%$ lower than the qubit Hamiltonian resulting from a JW mapping $H^{r=1}_{\rm JW}\equiv \sum_{i=1} \left( X_i X_{i+1} + Y_i Y_{i+1} \right)$.

\subsubsection{2D models}

Next, we consider nearest-neighbor hopping models on two-dimensional lattices. It is well-known that Bravyi-Kitaev outperforms Jordan-Wigner in higher dimensions \cite{ultrafast2024obrien}, due to the longer Jordan-Wigner strings on a linearly enumerated lattice. Recent work in Ref. \cite{Chiew_2023} considered improving the average Pauli weight associated with a Jordan-Wigner mapping by optimizing the enumeration of fermionic modes.
However, this approach re-discovers the Mitchison-Durbin ordering \cite{Mitchison1986} to achieve only a 13.9\% reduction in Pauli weight compared to the naive Z or S shaped enumerations and falls behind Bravyi-Kitaev for larger lattices. We find that our optimization approach successfully finds mappings that outperform those introduced in Ref. \cite{Chiew_2023}. 

Figure \ref{fig:square-avg} shows the average Pauli weight comparison between the optimized mapping and the conventional mappings.  We note that for all results shown in Fig. \ref{fig:square-avg}, the initial enumeration of the sites is one that simply snakes through the lattice sites. Remarkably, our optimization far outperforms both the optimal enumeration of Jordan-Wigner \emph{and} the Bravyi-Kitaev mappings, achieving $\approx 35\%$  reduction for 8 by 8 lattices.
However, our approach is still limited by feasibility of numerical optimization. For 9 by 9 grids, the number of qubits is already quite large, and the results suggest that we are approaching the computational limit of our optimization capabilities. By $100$ qubits, little reduction can be obtained in reasonable time. Nonetheless, we emphasize that further analytical understanding of the mappings found through our optimization strategy promises to improve the applicability of our results to larger two-dimensional lattices.

\subsection{2D Hubbard model}

Given the promising advantage of the optimized fermion-to-qubit mapping in two dimensions, we apply our optimization method to the 2D Hubbard model. Our results show that the introduction of on-site density-density interactions between different spin components leads to mappings that retain the percent reduction in average Pauli weight, compared to the Pauli weight associated with an optimal ternary-tree mapping. However, because the number of terms for the spinful 2D Hubbard model is approximately 5/2 times that of spinless 2D model with only hopping terms, it becomes more difficult to optimize mapping for larger systems.
As shown in Figure $\ref{fig:hubbard-avg}$, our optimized mapping outperforms the optimal ternary mapping, which is the most efficient conventional mapping for all system sizes we have considered.
Our results show that a more than a $20\%$ reduction of the average Pauli weight can be achieved for two-dimensional square grids with side length $L\leq 6$, although we already observe that the performance of the optimized mapping is decreasing in this regime. It is also important to emphasize that for a grid side length 6, the number of terms in the Hubbard model is $349$. 
Moreover, we observe that except for the largest lattice instances, the optimized mappings obey the necessary condition for ternary-tree mappings, that the qubit representation of each single fermion operator anticommute with one another only on a single qubit \cite{Jiang_2020}. While we cannot definitively determine that the optimal fermion-to-qubit mappings for the two-dimensional hopping Hamiltonian is a ternary-tree, we suspect that increasing the time for optimization will result in optimized mappings that are ternary-trees. These results highlight the importance of developing tools to analyze the mappings resulting from our optimization scheme. Determining the regular structures that may arise in our optimized mappings will allow us to design fermion-to-qubit mappings that perform well on extended two-dimensional models.


\subsection{Hydrogen chains}
\label{sec:hydrogen}
In this section, we study the electronic structure Hamiltonian for linear chains of $n_{\rm atom}$ hydrogen atoms spaced $0.735$\r{A} apart ( e.g., $n_{\rm atom} = 2$, corresponds to a diatomic hydrogen molecule). The Hamiltonians are generated using Qiskit Nature with PySCF, using the STO-3G basis. This has the full form of (\ref{eq:elec-struc-ham}) with both hopping and interaction terms, and the coefficients $h_{ij}$ and $h_{ijkl}$ are computed via classical integration. Including spin, there are $2n_{\rm atom}$ spin orbitals (i.e., $n = 2n_{\rm atom}$ qubits). Because of the long-range interactions between different Hydrogen atoms, the system can be considered as an interacting version of the all-to-all coupling Hamiltonian studies in Section \ref{sec: fermion lattice}. Besides, unlike the interactions in the 2D Hubbard model discussed in the last subsection, the Hamiltonian of the Hydrogen chain includes interaction terms of the form $a_i^{\dagger}a_j^{\dagger}a_ka_l$, where all four indices are unique.

\begin{figure*}[t]
\centering
\includegraphics[width=\textwidth]{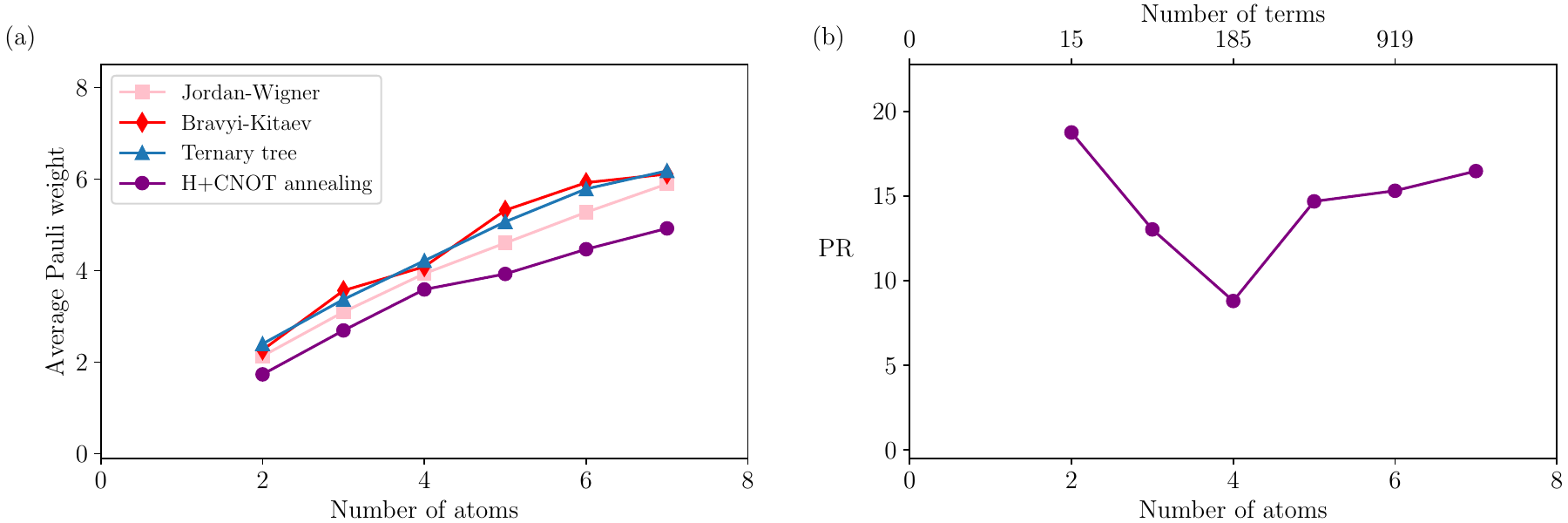}
\caption{Same as Fig. \ref{fig:range-avg} for a linear chain of hydrogen atoms, as a function of the number of Hydrogen atoms (bottom $x$-axis).}
\label{fig:hydrogen-orig}
\end{figure*}

Figure \ref{fig:hydrogen-orig} shows the optimized result. We observe a consistent reduction in Pauli weight by around 15\% compared to the best performing of three conventional mappings, namely JW, BKT, and optimal ternary-tree mappings.
The most important difference between the optimized mappings found for the hydrogen chain Hamiltonian compared to Hamiltonians studied in the previous sections is that the former cannot be expressed as a ternary-tree mapping, even for small system sizes, where we expect the optimization protocol to converge to the global minimum.  
To validate this insight, we employed brute force search for the smallest system sizes, we show that our mapping beats \emph{any} ternary-tree counterpart.
In particular, we find that for $n=4$, none of the approximately $1.5 \times 10^6$ ternary-trees outperform the mapping resulted from the numerical optimization over the circuits constructed with $\mathcal{S}_{\rm CH}$ gate set. The best performing ternary-tree results in an average Pauli weight $\approx 18\%$ higher than that of the optimal mapping.

Remarkably, success of the optimal mapping obtained by applying gates in $\mathcal{S}_{\rm CH}$ to an initial JW mapping can be attributed specifically to the presence of interaction terms that have the form $a_i^\dagger a_j^\dagger a_ka_l$ with $i,j,k,l$ are all distinct. Indeed, the advantage over ternary-trees is present even if we only consider a single exchange interaction term $a_1^\dagger a_2^\dagger a_3 a_4 + h.c.$ Then, the optimal outperforms any ternary-tree mapping by at least $20\%$.

\section{Conclusions and Outlook}
\label{sec:conclusion}

To summarize, we proposed a general framework to design new fermion-to-qubit mappings parametrized through Clifford circuits. We demonstrated that our optimization framework for fermion-to-qubit mappings 
can incorporate and leverage the detailed information of underlying fermionic Hamiltonian, thereby allowing us to improve on mappings that only target the Pauli weights of   
individual fermionic creation ($a_i^\dagger$) and annihilation ($a_i$) operators.
We proved that the Clifford ansatz can i) express all optimal ternary-tree mappings for individual fermionic creation ($a_i^\dagger$) and annihilation ($a_i$) operators and ii) improve upon the ternary-tree mapping even when the Hamiltonian is a single interaction term of the form $a^{\dagger}_{i}a^{\dagger}_{j}a_k a_k$.

From a more practical perspective, we demonstrated the advantage of using fermion-to-qubit mappings obtained through our optimization scheme for a  
wide range of Hamiltonians, including one-dimensional hopping and two-dimensional interacting fermionic lattice models, as well as \textit{ab initio} electronic structure Hamiltonians of hydrogen chains. We observed consistent reduction in average Pauli weight in comparison to optimal ternary-tree mappings. 
Last but not least, despite that the run time of the optimization algorithm can be challenging to quantify theoretically, we demonstrate up to $40\%$ reductions in Pauli weight using only modest computational resources over three days.   
Our work provides a general framework for designing more efficient fermion-to-qubit mappings, opening up a practical way of improving the performance of quantum computers in tackling tasks that involve fermionic degrees of freedom \cite{bravyi2002fermionic}.

Despite these accomplishments, a few important open questions remain.
Although our cost function incorporates structures of the underlying fermionic Hamiltonian, it does not include interaction strength in the Hamiltonian or qubit hardware connectivity. These fine-grained information can be incorporated in the cost function to further tailor the mapping towards a specific quantum hardware and/or Hamiltonians of interest. Furthermore, depending on applications, different quantum algorithms, such as finding the ground state \emph{or} simulating the dynamics of the fermionic system, may require some fermionic terms or gates to be optimized more than others. In this respect, we expect that using our framework together with conventional transpilers \cite{javadi2024quantum} would be beneficial. Finally, we expect the impact of optimization based design of fermion-to-qubit mappings to be significantly increased when used together with analytical tools that can uncover the regularities of the optimized mappings. We hope that the ability to further interpret the optimized mapping will pave the way to design customized mappings for intermediate and large scale fermionic systems.

\begin{acknowledgement}
We thank Luke Schaeffer and Joseph Carolan for helpful discussions on ternary-trees and suggesting the theory in Section \ref{sec:single-ops}, as well as Pengfei Zhang on the equivalence between XY model and two disconnected Ising chains. J.Y. was supported in part by the DoE ASCR Quantum Testbed Pathfinder program (awards No. DE-SC0019040 and No. DE-SC0024220), DARPA SAVaNT ADVENT and NSF QLCI (award No. OMA-2120757). Y.L. was supported in part by the U.S. Department of Energy, Office of Science, under contract number DE-SC0025384.

\end{acknowledgement}

\bibstyle{achemso}

\providecommand{\latin}[1]{#1}
\makeatletter
\providecommand{\doi}
  {\begingroup\let\do\@makeother\dospecials
  \catcode`\{=1 \catcode`\}=2 \doi@aux}
\providecommand{\doi@aux}[1]{\endgroup\texttt{#1}}
\makeatother
\providecommand*\mcitethebibliography{\thebibliography}
\csname @ifundefined\endcsname{endmcitethebibliography}
  {\let\endmcitethebibliography\endthebibliography}{}

\appendix

\section{Algorithms} 
\label{sec:algorithms}

In this Appendix, we discuss the details of our optimization algorithm and compare the choices of search algorithm and gate sets. We first present two search algorithms in \ref{subsec:appendix-sa} and  \ref{subsec:appendix-bfs}. We compare these in \ref{sec:comparison} and discuss the choice of Clifford gate sets.

\subsection{Simulated annealing}
\label{subsec:appendix-sa}

Our primary search algorithim is simulated annealing \cite{kirkpatrick1983optimization,vanLaarhoven1987, Bertsimas1993, Nikolaev2010}. Simulated annealing for quantum circuit optimization utilizes a probabilistic strategy of sequentially adding Clifford gates one at a time. The pseudocode of the simulated annealing algorithm we used is given in Algorithm $\ref{alg:sa}$. The algorithm initiates with an initial quantum circuit $U=I$ and systematically explores neighboring solutions by incorporating one Clifford gate at each iteration.
If the updated $U_{\rm C}$ yields a lower cost, it is promptly accepted as the current solution. However, to evade local minima, circuits resulting in higher costs are stochastically accepted, relying on a probability function controlled by the temperature parameter $\beta$. This probability diminishes as the algorithm progresses, gradually narrowing the search towards more promising configurations, mirroring the cooling process. The algorithm concludes when predefined stopping criteria are met, such as reaching the maximum number of iterations or attaining the target temperature. The output is the most optimal quantum circuit discovered during the search.
During the calculation, $\beta$ depends on the number of iterations $t$, and $U_C$ is dynamically updated. Hence, we call $\beta(t)$ the annealing schedule. In Algorithm \ref{alg:sa}, $U_C$ is given by the product of the quantum gates in the sequence $M$. 

The choice of the annealing schedule $\beta(t)$ has significant impacts on the performance \cite{Nikolaev2010}. If the schedule is too aggressive (i.e. grows too fast), the search may get stuck in a local minimum early on. However, if it is too slow, the search may explore far away and take a long time to converge. We use schedules of the general form
\begin{align}
    \beta(t) = \log(c_1 + c_2 t) \cdot \frac{c_3}{C(B)} \cdot \Theta(t-t_{\min})
    \label{eq: beta}
\end{align}
and manually adjust the hyperparameters $c_1$, $c_2$, $c_3$, and $t_{\min}$ for each run. The $\Theta(t-t_{\min})$ term results in random gates being applied for the first $t_{\min}$ steps, as $\beta = 0$ corresponds to infinite temperature. 

In scenarios with an extensive number of gate combinations, this approach permits the algorithm to halt at any point and deliver the best solution found thus far. Although the heuristic nature of simulated annealing does not guarantee global optimality, its efficacy lies in efficiently traversing large solution spaces.

\begin{algorithm*}[t]
    \caption{Simulated annealing for fermionic-qubit mapping}
    \label{alg:sa}
    \raggedright
    \textbf{Input}: Hamiltonian $H$, gate set $\mathcal{S}$, cost function $C$, annealing schedule $\beta$, number of iterations $t_{\max}$ \\
    \textbf{Output}: New Hamiltonian with reduced cost and a (very long) list of moves
    
    \begin{algorithmic}[1]
    \State initialize move list $M \gets [\ ]$
    \For {$t=1$ to $t_{\max}$}
        \State randomly sample a gate $G \in \mathcal{S}$
        \If {with probability $e^{-\beta(t) [C(GBG^\dagger) - C(B)]}$}
            \State accept $G$ by setting $B \gets GBG^\dagger$ and appending $G$ to $M$
        \EndIf
    \EndFor
    \State \Return $(B, M)$
    \end{algorithmic}
\end{algorithm*}

\subsection{Best-first search}
\label{subsec:appendix-bfs}
For completeness, we also include a simple algorithm based on best-first search, a greedy algorithm that explores the lowest-cost option first while storing the next-best options for backtracking.\cite{ai-bfs} The pseudocode for the algorithm that we use is given in Algorithm $\ref{alg:SA}$. 
This algorithm is deterministic and can eventually search all combinations. However, when the number of combinations is large, we have to stop the algorithm in the middle of the calculation and take the best result as the solution at that point.

\begin{algorithm*}[t]
    \caption{Best-first search for fermionic-qubit mapping}
    \label{alg:search}
    \raggedright
    \textbf{Input}: Hamiltonian $H$, gate set $\mathcal{S}$, cost function $\mathcal{C}$ \\
    \textbf{Output}: Sequence of gates from $\mathcal{S}$ which best reduces the cost of $H$
    
    \begin{algorithmic}[1]
    \State initialize best cost $b \gets \mathcal{C}(H)$, best gate sequence $B \gets [\ ]$
    \State initialize priority queue $\mathcal{Q} \gets [(H, B)]$ 
    with $\mathcal{C}(H)$ as priority
    \While {$\mathcal{Q}$ is not empty and not timeout}
        \State $(A, M) \gets$ pop lowest-cost Hamiltonian and gate sequence from $\mathcal{Q}$
        \For {each gate $G \in \mathcal{S}$}
            \If {$\mathcal{C}(GAG^\dagger) < \mathcal{C}(A)$}
                \State insert $(GAG^\dagger, M+[G])$ into $\mathcal{Q}$
                \If {$\mathcal{C}(GAG^\dagger) < b$}
                    \State update best cost $b \gets \mathcal{C}(GAG^\dagger)$ and best gate sequence $B \gets M$
                \EndIf
            \EndIf
        \EndFor
    \EndWhile
    \State \Return $B$
    \end{algorithmic}
\label{alg:SA}
\end{algorithm*}

\subsection{Performance comparison}\label{sec:comparison}

Our optimization protocol encompasses various degrees of freedom, allowing us to make flexible choices, such as selecting input Hamiltonians, gate sets, and algorithms. We conduct a succinct benchmarking process to determine the most effective combination for our specific optimization goals.

First, we discuss the choice of the search algorithm. In both best-first search and simulated annealing, we iteratively apply Clifford transformations to the Hamiltonian. The difference lies in how we choose the Clifford and how each iteration connects to the subsequent one. We observe that for small system sizes, both algorithms obtain the same optimized cost, whereas for larger systems, best-first search obtains a smaller improvement faster, but with a proper annealing schedule and sufficient time, the simulated annealing outperforms best-first search.

Next, both algorithms require an input qubit Hamiltonian. This is typically chosen from a conventional fermion-to-qubit mapping applied to the given fermionic Hamiltonian. For a given fermionic Hamiltonian, we find no discernible difference in optimized Pauli weight across initial maps beyond the randomness of the simulated annealing algorithm.

Lastly, we compare the performance of different gate sets $\mathcal{S}_{C}$, $\mathcal{S}_{CH}$, and $\mathcal{S}_{CHS}$. As to the choice of these gate sets, recall $\mathcal{S}_C$ consists only of $\CNOT$ gates. We would like to extend the search space to include other Clifford gates such as Hadamard and phase gates. However, a Hadamard will swap Paulis via ${\rm H}_i X_i {\rm H}_i = Z_i$, but it alone will not change the Pauli weight. Therefore we only include a Hadamard gate together with a CNOT gate with a shared qubit, and they are treated as a single unit.

We observe that $\mathcal{S}_{CH}$ has better optimization performance than $\mathcal{S}_{C}$, whereas we observe minimal difference in the performance of $\mathcal{S}_{CH}$ and $\mathcal{S}_{CHS}$.
This observation regarding the equivalent performance between $\mathcal{S}_{CH}$ and $\mathcal{S}_{CHS}$ is presented as a numerical conjecture. In addition, while the best result is the same for $\mathcal{S}_{CH}$ and $\mathcal{S}_{CHS}$, the behaviors during the algorithm are different. Adding $S$ gates slows down the optimization process. However, it may work to help escape from a local minimum, as it provides additional transformations of the Hamiltonian.

\subsection{Ternary-tree search}

For a small number of qubits $n$, it is possible to brute-force enumerate all possible trees on $n$ qubits. By doing so, we demonstrate Hamiltonians for which the optimized fermion-to-qubit map outperforms any ternary-tree mapping.

As formalized in Sec. \ref{sec:single-ops}, a ternary-tree mapping consists of three pieces of information: (1) its shape, (2) a labeling of parent nodes, and (3) a labeling of leaves. The labeling of parent nodes does not affect the Pauli weight of resulting terms, as they only affect the ordering of the qubits; swapping two parent node labels corresponds to swapping two qubits, i.e. a SWAP gate.

In general, the number of ternary-tree shapes on $n$ nodes is $\frac{1}{2n+1}\binom{3n}{n}$, which grows exponentially fast.\cite{graham1994concrete} However, we may consider two trees to be equivalent for the purpose of Pauli weight if they have the same set of children for each parent and differ only in the order of each parent's children. Permuting a given parent's left, middle, and right children corresponds to permuting the Pauli $X$, $Y$, and $Z$ for that qubit, which does not change the Pauli weight. Without loss of generality, we need only enumerate ternary-tree shapes with nodes flushed right. For $n=4$, there are $4$ such trees.

The main computational cost is to iterate over all $(2n+1)!$ enumerations of the leaves for each tree shape. This quickly becomes intractable for large $n$, but is feasible for $n=4$ with $4 \cdot 9! \approx 1.5 \times 10^6$ cases. For the hydrogen molecule with $2$ atoms (15 terms), we find that the best ternary-tree has total Pauli weight $32$ whereas our algorithm gives a mapping with total Pauli weight $26$.

We conjecture that this distinction is due to the presence of interaction terms. Even with a single exchange interaction term of the form $a_1^\dagger a_2^\dagger a_3 a_4 + h.c.$, we find that the best ternary-tree mapping obtains a total Pauli weight of $20$ whereas our algorithm gives a mapping with total Pauli weight $16$. However, we find no advantage present when we include only hopping terms or density interactions.

\section{Proofs}
\label{sec:proofs}

Here we give formal proofs of the results in Sec \ref{sec:single-ops}.

Given a ternary-tree, we will call the \emph{binary subtree} the subgraph induced by the root and all left and right children. Note the Jordan-Wigner and Bravyi-Kitaev trees both have all their parent nodes in their binary subtrees; we call such trees \emph{binary-shaped}.

\begin{lemma}
\label{lem:tree-rotation}
Let $T$ be a ternary-tree mapping where node $j$ is a left child of node $k$. Then conjugating by $\CNOT_{jk}$ results in a ternary-tree mapping where node $j$ is rotated right (see Figure \ref{fig:tree-rotation} in the main text) about node $k$ in the binary subtree of $T$, and all middle children retain their parent.
\end{lemma}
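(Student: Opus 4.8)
The plan is to track how every leaf's Pauli string $P_\gamma$ transforms under conjugation by $\CNOT_{jk}$, exploiting the fact that this Clifford acts nontrivially only in a local neighborhood of nodes $j$ and $k$. First I would observe that, since $j$ and $k$ each label a unique node of $T$, the single-qubit factors $X_j,Y_j,Z_j$ and $X_k,Y_k,Z_k$ can appear in $P_\gamma$ only when the root-to-$\gamma$ path passes through node $j$ or $k$ respectively. Consequently every leaf lying in the subtree rooted at $k$ has a Pauli string of the form $P_\gamma = \pi \cdot \ell_\gamma \cdot s_\gamma$, where $\pi$ is the fixed product accumulated on the path from the root down to $k$, $\ell_\gamma$ is supported only on qubits $j,k$, and $s_\gamma$ is supported on descendants below the relevant branch. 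Crucially, neither $\pi$ nor any $s_\gamma$ touches qubits $j$ or $k$, and every leaf outside the subtree of $k$ is fixed by the conjugation. Hence it suffices to analyze the five families of leaves determined by the branch they enter, with local factors $\ell_\gamma$ equal to $X_kX_j$ (left child of $j$), $X_kY_j$ (middle child of $j$), $X_kZ_j$ (right child of $j$), $Y_k$ (middle child of $k$), and $Z_k$ (right child of $k$).

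Next I would apply the adjoint action of $\CNOT_{jk}$, with control $j$ and target $k$, which sends $X_j\mapsto X_jX_k$, $Y_j\mapsto Y_jX_k$, $Z_j\mapsto Z_j$, $X_k\mapsto X_k$, $Y_k\mapsto Z_jY_k$, and $Z_k\mapsto Z_jZ_k$. Applying this factor-by-factor and simplifying with $X_k^2=I$ and the commutativity of single-qubit Paulis on distinct qubits, I would verify that the five local factors transform as $X_kX_j\mapsto X_j$, $X_kY_j\mapsto Y_j$, $X_kZ_j\mapsto Z_jX_k$, $Y_k\mapsto Z_jY_k$, and $Z_k\mapsto Z_jZ_k$. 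These are exactly the products obtained by traversing the rotated tree of Figure~\ref{fig:tree-rotation}: $j$ now occupies the position reached by $\pi$, its left and middle branches (factors $X_j$, $Y_j$) still lead to the former left and middle subtrees of $j$, and its right branch (factor $Z_j$) leads to $k$, whose left, middle, and right branches (full factors $Z_jX_k$, $Z_jY_k$, $Z_jZ_k$) lead respectively to the former right subtree of $j$ and to the former middle and right subtrees of $k$.

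Finally I would conclude. Because every transformed string $\CNOT_{jk}\, P_\gamma\, \CNOT_{jk}$ coincides with the tree-traversal string of the rotated tree under the same leaf-to-Majorana and node-to-qubit labeling, the transformed mapping is again a ternary-tree mapping whose shape is $T$ with node $j$ rotated right about $k$ in the binary subtree; its defining pairwise-anticommutation structure follows automatically, since Clifford conjugation preserves commutators. The rotated tree remains full (node $j$ retains three children, namely its former left and middle subtrees and $k$, and node $k$ retains three children, namely the former right subtree of $j$ and its own former middle and right subtrees), and the middle branches of both $j$ and $k$ are untouched, which yields the stated claim that all middle children retain their parent.

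The main obstacle is bookkeeping rather than conceptual: the genuinely load-bearing point is the locality observation of the first paragraph, that $\CNOT_{jk}$ fixes all leaves outside the subtree of $k$ and acts on the others only through the qubit-$j,k$ factors. Once this is in hand, the five-case verification is routine, the only subtleties being to fix the correct control/target orientation of $\CNOT_{jk}$ and to handle the cancellations such as $X_kX_j\mapsto X_j$ correctly.
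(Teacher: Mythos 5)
Your proposal is correct and follows essentially the same route as the paper's proof: both reduce the problem to the five subtrees hanging off nodes $j$ and $k$, use the locality of $\CNOT_{jk}$ to argue each subtree moves as a single entity, and then check the conjugation action on the two-qubit Pauli factors (the paper via its lookup table, you by explicit computation). The explicit matching of the five transformed factors to the traversal strings of the rotated tree is exactly the intended argument.
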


\begin{lemma}
\label{lem:tree-middle}

Let $T$ be a ternary-tree mapping where node $j$ is a middle child of node $k$. Then conjugating by $\CNOT_{jk}$ transforms the tree as in Figure \ref{fig:tree-middle} in the main text.
\end{lemma}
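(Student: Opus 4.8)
The plan is to compute the adjoint action of $\CNOT_{jk}$ directly on the leaf Pauli strings of $T$ and read off the resulting tree. Recall that, with $j$ the control and $k$ the target, conjugation by $\CNOT_{jk}$ acts on single-qubit Paulis by $X_j \mapsto X_j X_k$, $Z_k \mapsto Z_j Z_k$, $Y_j \mapsto Y_j X_k$, $Y_k \mapsto Z_j Y_k$, while $X_k$ and $Z_j$ are fixed. The crucial structural observation is that, because $T$ is a ternary tree, every root-to-leaf path visits each of the nodes $j$ and $k$ at most once; hence the only factors on qubits $j$ and $k$ appearing in a leaf string $P_\gamma$ are exactly those contributed by the edges descended at $j$ and $k$ along that path. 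Consequently $\CNOT_{jk}$ leaves all factors on the remaining qubits untouched, and its effect on $P_\gamma$ is completely determined by the pair of factors on $(j,k)$, which in turn depends only on which part of the subtree rooted at $k$ the leaf $\gamma$ lies in.

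Accordingly, I would partition the leaves below $k$ into the five families of Figure \ref{fig:tree-middle}: the subtree $A$ hanging from the left child of $k$ (factors $I_j X_k$), the three subtrees $B$, $C$, $D$ hanging from the left, middle, and right children of the middle node $j$ (factors $X_j Y_k$, $Y_j Y_k$, $Z_j Y_k$), and the subtree $E$ hanging from the right child of $k$ (factors $I_j Z_k$). For each family one evaluates the conjugated $(j,k)$-factor; this is a one-line computation in each case, e.g. $\CNOT_{jk}(X_j Y_k)\CNOT_{jk} = (X_j X_k)(Z_j Y_k) = Y_j Z_k$ and $\CNOT_{jk}(Z_j Y_k)\CNOT_{jk} = Z_j(Z_j Y_k) = Y_k$.

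Reading off the five results, namely $A \mapsto I_j X_k$, $D \mapsto I_j Y_k$, $E \mapsto Z_j Z_k$, $B \mapsto Y_j Z_k$, and $C \mapsto \pm X_j Z_k$, I would match each transformed factor to a position in the target tree: $A$ remains at the left child of $k$; $D$ moves to the middle child of $k$; and since $E$, $B$, $C$ all acquire the factor $Z_k$, the node $j$ becomes the right child of $k$, with $C$, $B$, $E$ sitting respectively at its left, middle, and right children. Because the factors on all other qubits are unchanged, the transformed strings are exactly the leaf strings of the tree drawn on the right of Figure \ref{fig:tree-middle}, which establishes the claim.

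The one point that requires care is the family $C$, where the computation gives $\CNOT_{jk}(Y_j Y_k)\CNOT_{jk} = (Y_j X_k)(Z_j Y_k) = -X_j Z_k$, i.e. an overall sign on that Majorana string. I expect this sign to be the main (if minor) obstacle: one must note that negating a single Majorana, $\gamma \mapsto -\gamma$, is an innocuous relabeling that preserves Hermiticity, the involution property, and the pairwise single-qubit anticommutation structure, and in particular leaves the tree shape and every Pauli weight unchanged. Hence the transformed mapping is precisely the ternary-tree mapping of Figure \ref{fig:tree-middle} up to this harmless sign, completing the argument.
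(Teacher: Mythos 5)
Your proposal is correct and follows essentially the same route as the paper's proof: both partition the leaves into the five subtrees hanging below $j$ and $k$, observe that each subtree carries a fixed $(j,k)$-factor so that $\CNOT_{jk}$ moves it as a single entity, and then read the target tree off the five transformed two-qubit Paulis (the paper tabulates these in Table~\ref{tab:cnot-actions} rather than computing them inline). Your explicit handling of the $-X_jZ_k$ sign is a welcome addition; the paper only notes in the table caption that signs are omitted.
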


\begin{proof}[Proofs of Lemma \ref{lem:tree-rotation} and Lemma \ref{lem:tree-middle}]
Consider the $5$ children among nodes $j$ and $k$. For each child's subtree, all leaves in that subtree have the same Paulis on qubits $j$ and $k$. Since $\CNOT_{jk}$ only acts on those two qubits, the subtree will move as one entity, and it suffices to determine the action of $\CNOT_{jk}$ of Paulis restricted to those two qubits, as given in Table \ref{tab:cnot-actions}. The resulting tree is constructed to satisfy the $5$ transformed Paulis.
\end{proof}

\begin{proof}[Proof of Theorem \ref{thm:jw-bk}]
We construct a sequence of CNOTs that transforms the $n$-qubit Bravyi-Kitaev tree into the $n$-qubit Jordan-Wigner tree. Call the \emph{right spine} of the tree the root and all nodes which are right descendants from the root. The Jordan-Wigner tree has all parent nodes on the right spine. Using Lemma \ref{lem:tree-rotation}, we iteratively apply layers of CNOT gates to rotate nodes onto the right spine. 

On each iteration, iterate through every node $j$ that is a left child of a parent $k$ on the right spine, and apply $\CNOT_{jk}$. Each such CNOT gate increases the number of nodes on the right spine by $1$, so this procedure eventually terminates with every node on the right spine. Since both the Bravyi-Kitaev and Jordan-Wigner have both their parents and leaves labeled via an inorder traversal, and tree rotations maintain invariant the inorder traversal, we end with exactly the Jordan-Wigner mapping.
\end{proof}
\begin{remark}
By way of construction, we have also shown that the number of CNOTs $i_{\max}$ is less then $n$, and that the above procedure can be implemented in a CNOT circuit of depth $O(\log n)$.
\end{remark}

\begin{corollary}
Given any two $n$-qubit binary-shaped mappings with the same inorder traversal of nodes and the same inorder traversal of leaves, there exists a sequence of $O(n)$ CNOT gates transforming between them.
\end{corollary}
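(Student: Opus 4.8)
The plan is to use the Jordan-Wigner tree as a canonical intermediate form, exploiting the fact established in the proof of Theorem \ref{thm:jw-bk} that any binary-shaped mapping can be straightened onto the right spine with $O(n)$ CNOT gates. First I would observe that the rotation procedure used there to turn the Bravyi-Kitaev tree into the Jordan-Wigner tree never relied on the specific shape of the Bravyi-Kitaev tree: given \emph{any} binary-shaped tree, one iteratively applies $\CNOT_{jk}$ (via Lemma \ref{lem:tree-rotation}) to each node $j$ that is a left child of a parent $k$ already lying on the right spine. By Lemma \ref{lem:tree-rotation} each such gate leaves the middle children attached to their parents and increases the length of the right spine by one. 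Since there are $n$ parents and the root already lies on the right spine, the process terminates after at most $n-1$ rotations in the right-spine (Jordan-Wigner-shaped) tree.

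The key point to nail down is that the resulting canonical tree is not merely Jordan-Wigner in \emph{shape}, but is the \emph{identical labeled mapping} for both $T_1$ and $T_2$; this is where the hypothesis on inorder traversals enters. Because the CNOT tree rotation of Lemma \ref{lem:tree-rotation} preserves the inorder traversal of the nodes and of the leaves (the same invariant invoked in the proof of Theorem \ref{thm:jw-bk}), straightening $T_i$ onto the right spine produces a right-spine tree whose qubit and Majorana labels are exactly the inorder traversals of $T_i$. A right-spine tree is completely determined by its inorder traversal of nodes together with its inorder traversal of leaves, so if $T_1$ and $T_2$ share both inorder traversals they are carried to one and the same Jordan-Wigner mapping $T_\star$.

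I would then finish by composing the two reductions. Let $U_1$ be the product of the (fewer than $n$) CNOT gates carrying $T_1 \to T_\star$, and $U_2$ the analogous product for $T_2 \to T_\star$. Then $U_2^\dagger U_1$ carries $T_1$ to $T_2$, and since each CNOT is its own inverse, $U_2^\dagger$ is simply the reversed sequence of the CNOTs composing $U_2$. The total gate count is therefore at most $2(n-1) = O(n)$, as claimed.

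The main obstacle I anticipate is not the gate-count bookkeeping but two structural claims underlying the canonical-form argument: that repeated right-rotations of left children flatten an \emph{arbitrary} binary-shaped tree into the right spine (not just the Bravyi-Kitaev tree), and that the right-spine tree is uniquely pinned down by its two inorder traversals. The former is the standard tree-straightening fact (the tree-to-vine transformation underlying, e.g., Day-Stout-Warren balancing), which I would justify by noting that each rotation exposes the former right subtree of $j$ as a new left child of $k$, so the entire left subtree is eventually absorbed into the spine in inorder. The latter amounts to checking that, on a linear right spine, the node inorder fixes the top-to-bottom ordering of parents and the leaf inorder fixes the assignment of Majoranas to the left/middle (and final right) positions. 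Both are routine once phrased correctly, but they are the load-bearing observations that make the composition through $T_\star$ valid.
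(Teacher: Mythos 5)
Your proposal is correct and follows essentially the same route as the paper: reduce each binary-shaped mapping to the Jordan-Wigner spine via the construction from Theorem \ref{thm:jw-bk} and compose $U_2^\dagger U_1$, with the inorder-traversal invariant guaranteeing both trees reach the same labeled canonical mapping. Your write-up is in fact more explicit than the paper's two-line proof about why the shared inorder traversals pin down a single intermediate mapping, but the underlying argument is identical.
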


\begin{proof}
Using our above construction, we can find a sequence of CNOT gates $U_1$ that transforms the first mapping into a Jordan-Wigner-shaped tree, and likewise $U_2$ for the second. Then $U_2^\dagger U_1$ transforms the first into the second.
\end{proof}

\begin{proof}[Proof of Theorem \ref{thm:cnot-tree-shape}]
Similar to Theorem \ref{thm:jw-bk}, we iteratively transform each tree into the Jordan-Wigner right spine shape. For each node $k$ on the right spine, if it has a left child $j$, then we apply $\CNOT_{jk}$ to rotate node $j$ onto the spine by Lemma \ref{lem:tree-rotation}, increasing the number of nodes on the right spine by $1$. If node $k$ has a middle child $j$, then $\CNOT_{jk}$ moves node $j$ onto the right spine by Lemma \ref{lem:tree-middle}, also increasing the number of right spine nodes by $1$. This process terminates once all $n$ parent nodes are on the right spine.

Let $U_1$ be the sequence of Cliffords transforming $T_1$ to the Jordan-Wigner shape, and likewise $U_2$ for $T_2$. Then $U_2^\dagger U_1$ transforms $T_1$ into the shape of $T_2$.
\end{proof}

\begin{remark}
By way of construction, we have shown $i_{\max} \le 2n-2$ for this procedure.
\end{remark}

We remark that our algorithm does not perform the same sequence of gates as our proofs, but nevertheless finds an optimal-weight tree within the CNOT search space. 

\begin{proof}[Proof of Theorem \ref{thm:trees-included}]
First, using the construction in Theorem \ref{thm:cnot-tree-shape}, we convert both trees to the Jordan-Wigner shape, consisting solely of a right spine. Next, using SWAP gates, implemented as $\SWAP_{ij} = \CNOT_{ij} \CNOT_{ji} \CNOT_{ij}$, we obtain any permutation of qubit labelings. Now, for a given qubit $k$, the Hadamard and phase gates, $\mathrm{H}_k$ and $\mathrm{S}_k$, can achieve any permutation of leaves on node $k$. Finally, by Lemma \ref{lem:tree-rotation}, if $k$ is the right child of $j$, then $\CNOT_{jk} \mathrm{H}_j \CNOT_{jk}$ swaps the left child of node $j$ with the left child of node $k$. This combined with single-qubit gates can achieve any labeling of Majoranas. 
\end{proof}

\begin{table*}[h]
\centering
\begin{tabular}{|l|c|c|c|c|c|c|}
\hline
   & CNOT$_{i,j}$ 
   & CNOT$_{j,i}$
   &CNOT$_{i,j}$H${_i}$ &CNOT$_{j,i}$H${_i}$ &CNOT$_{i,j}$S${_i}$ &CNOT$_{j,i}$S${_i}$
\\
\hline
II & II& II& II& II& II& II\\ 
IX & IX& XX& IX& XX& IX& XX\\ 
IY & ZY& XY& ZY& XY& ZY& XY\\ 
IZ & ZZ& IZ& ZZ& IZ& ZZ& IZ\\ 
XI & XX& XI& ZI& ZZ& YX& ZY\\ 
XX & XI& IX& ZX& YY& YI& ZY\\ 
XY & YZ& IY& IY& YX& XZ& ZX\\ 
XZ & YY& XZ& IZ& ZI& XY& YI\\ 
YI & YX& YZ& YX& YZ& XX& XI\\ 
YX & YI& ZY& YI& ZY& XI& IX\\ 
YY & XZ& ZX& XZ& ZX& YZ& IY\\ 
YZ & XY& YI& XY& YI& YY& XZ\\ 
ZI & ZI& ZZ& XX& XI& ZI& ZZ\\ 
ZX & ZX& YY& XI& IX& ZX& YY\\ 
ZY & IY& YX& YZ& IY& IY& YX\\ 
ZZ & IZ& ZI& YY& XZ& IZ& ZI\\ 
\hline
\end{tabular}
\caption{Full table for the actions on two-qubit Pauli operators by Clifford gates, not including sign. The 2-qubit Pauli operations in the first column is transformed by the gates on the first row to produce the output 2-qubit Paulis on the rest of the column. The default order of the two-qubit Pauli is $(i,j)$.}
\label{tab:cnot-actions}
\end{table*}

\end{document}